\documentclass[twocolumn,unpublished, a4paper]{quantumarticle}
\pdfoutput=1

\usepackage[utf8]{inputenc}
\usepackage[english]{babel}
\usepackage[T1]{fontenc}

\usepackage{tikz}
\usepackage{lipsum}
\usepackage{graphicx}%
\usepackage{multirow}%
\usepackage{amsmath,amssymb,amsfonts}%
\usepackage{amsthm}%
\usepackage{mathrsfs}%
\usepackage{xcolor}%
\usepackage{float}
\usepackage{listings}%
\usepackage{longtable}

\usepackage{orcidlink} 
\usepackage{algorithm}
 
\usepackage{algpseudocode}
\usepackage{hyperref}

\date{Publication Date}

\makeatletter
\newenvironment{breakablealgorithm}
  {%
   \begin{center}
   \refstepcounter{algorithm}%
   \hrule height.8pt depth0pt \kern2pt
   \renewcommand{\caption}[2][\relax]{%
     {\raggedright\textbf{\fname@algorithm~\thealgorithm} ##2\par}%
     \ifx\relax##1\relax
       \addcontentsline{loa}{algorithm}{\protect\numberline{\thealgorithm}##2}%
     \else
       \addcontentsline{loa}{algorithm}{\protect\numberline{\thealgorithm}##1}%
     \fi
     \kern2pt\hrule\kern2pt
   }%
  }
  {%
   \kern2pt\hrule\relax
   \end{center}
  }
\makeatother

\newtheorem{theorem}{Theorem}

\begin{document}

\title{Quantum Homomorphic Encryption: Towards Practical QOTP-Encrypted Computation on Gate-Based Hardware}

\author{Jon Hernández-Bueno}
\affiliation{TECNALIA, Basque Research and Technology Alliance (BRTA), Bizkaia Science and Technology Park, Building 700, E-48160 Derio, Bizkaia, Spain.}
\affiliation{Department of Communications Engineering, University of the Basque Country (UPV/EHU). Faculty of Engineering of Bilbao, Plaza Ingeniero Torres Quevedo, n.1, Bilbao, 48013, Spain.}
\orcid{0009-0009-6640-0983}\email{jon.hernandez@tecnalia.com}
\author{Oscar Lage}
\orcid{0000-0003-1168-1932}
\affiliation{TECNALIA, Basque Research and Technology Alliance (BRTA), Bizkaia Science and Technology Park, Building 700, E-48160 Derio, Bizkaia, Spain.}
\author{Marivi Higuero}
\affiliation{Department of Communications Engineering, University of the Basque Country (UPV/EHU). Faculty of Engineering of Bilbao, Plaza Ingeniero Torres Quevedo, n.1, Bilbao, 48013, Spain.}
\orcid{0000-0001-8451-556X}
\author{Jasone Astorga}
\affiliation{Department of Communications Engineering, University of the Basque Country (UPV/EHU). Faculty of Engineering of Bilbao, Plaza Ingeniero Torres Quevedo, n.1, Bilbao, 48013, Spain.}
\affiliation{EHU Quantum Center, University of the Basque Country (UPV/EHU). Faculty of Science and Technology, Barrio de Sarriena s/n, Leioa, 48940, Spain.}
\orcid{0000-0002-5532-004X}

\maketitle

\begin{abstract}

As quantum computing moves toward remote and cloud-based access, protecting quantum data during outsourced execution becomes important. Quantum one-time pad (QOTP) encryption provides information-theoretic hiding of quantum state, but evaluating gates on Pauli-encrypted data is nontrivial: Clifford operations allow classical Pauli-frame updates, whereas non-Clifford operations require key-dependent corrections.

We introduce QOTPH, a client-side key-tracked compilation framework for encrypted quantum computation. The client retains the QOTP keys and transforms each operation into a corrected gate or decomposition. The resulting circuit is executed non-adaptively by the quantum server, which neither receives the keys nor performs key updates.

We develop a unified Pauli frame compilation layer spanning Clifford operations, one- and two-qubit Pauli-generated rotations, explicit key-dependent parametrized controlled rotations, and recursively composed gates, and establish a global correctness result for circuits composed of the formally verified primitive rules. We also characterize the evaluator's view under a hidden logical circuit model, in which the QOTP frame, the uncompiled logical circuit, and the logical-to-physical instruction mapping remain client-side; the security of this model against various algorithms is also analyzed. The implementation is evaluated in Qiskit using noiseless simulation and IBM Quantum hardware, including in-circuit and local-decryption workflows.

This work bridges the gap between algebraic QOTP key tracking and practical encrypted quantum computation by providing a unified, hardware-validated compilation framework for current gate-based quantum processors.

\end{abstract}

\section{Introduction}
\label{sec:Intro}

Modern cryptography increasingly requires computing on data while keeping it encrypted, providing correct results from the processing of such data without revealing the underlying information, a requirement that also arises in quantum computation.

Homomorphic encryption~\cite{gong2024practical} currently represents one of the most promising research directions in post-quantum cryptography and computational privacy~\cite{first3finalized}. It is an emerging cryptographic paradigm that enables the evaluation of arbitrary functions on encrypted data, preserving the confidentiality of the information even during its processing. This class of cryptosystems is particularly relevant in scenarios involving delegated computation over untrusted environments, such as cloud computing services. Unlike traditional encryption schemes, homomorphic encryption allows a third party to process encrypted information without access to the secret keys, thereby offering both a theoretical and practical framework for secure computation without data disclosure.

The need for such cryptographic mechanisms has become increasingly critical in recent years due to the exponential growth of data processing in distributed environments and the rising concern over technological sovereignty, digital privacy, and the exposure of sensitive information, for example, in international contexts marked by geopolitical tensions~\cite{shoker2022digital}. As a result, research on homomorphic encryption schemes, such as BFV~\cite{fan2012somewhat}, BGV~\cite{brakerski2014leveled}, CKKS~\cite{cheon2017homomorphic}, and GSW~\cite{gentry2013homomorphic}, has gained significant momentum, both from a theoretical standpoint and in terms of practical implementations and algorithmic optimizations~\cite{InsiderHEMarket2024}.

In parallel with the development of advanced cryptography in the classical domain, quantum computing has undergone significant progress over the past decades, with the promise of revolutionizing multiple areas of computation and information security. This emerging computational paradigm has prompted a profound reassessment of the foundations of modern cryptography.

Two quantum algorithms clearly illustrate this impact. On the one hand, Shor's algorithm~\cite{shor1999polynomial} enables the polynomial-time factorization of integers and the computation of discrete logarithms on a quantum computer, directly undermining the security of classical schemes such as RSA, DSA, and ECC. On the other hand, Grover's algorithm~\cite{grover1996fast} offers a quadratically accelerated search over unstructured domains, reducing the effective security of symmetric encryption and hash functions by decreasing the brute-force complexity from $O(N)$ to $O(\sqrt{N})$.

Due to such breakthroughs, the necessity of implementing quantum-secure or post-quantum cryptographic schemes is becoming increasingly urgent, with the goal of preserving data security and privacy in the advent of quantum adversaries.

In this article, we address this challenge by developing and experimentally validating a QOTP-based key-tracked compilation framework for encrypted quantum computation. Our work is organized as follows: Section~\ref{sec:RelatedWork} reviews previous QHE constructions and positions QOTPH with respect to them. In Section~\ref{sec:Backg}, we present the fundamental concepts involved, briefly reviewing classical homomorphic encryption and the current state of quantum computing, including previous quantum encryption schemes. Section~\ref{sec:Approach} details our proposal, describing the development of the model. In Section~\ref{sec:Implementation}, we present the practical implementation of the model, providing illustrative pseudocode and a clear explanation of the experimental environment. The results obtained, together with their detailed analysis, are presented in Section~\ref{sec:Results}. Section~\ref{sec:security} analyzes the security properties and privacy boundaries of the direct QOTPH execution model, including potential QOTP key-inference channels when the evaluator has algorithmic or instance-specific side information. Finally, in Section~\ref{sec:Conclu}, we summarize the main conclusions, highlight the key contributions of this work, and discuss future implications and possible lines of research arising from this study.

\subsection{Contributions and scope}
\label{subsec:contributions}

The main contributions of this work are as follows:

First, we formulate QOTPH as a client-side key-tracked compilation framework. The client generates and retains the QOTP key frame, resolves all key-dependent gate corrections locally, optionally incorporates the initial $X$-mask into computational-basis input data, and submits a non-adaptive physical circuit to the quantum server.

Second, we provide a unified and formally verified gate-compilation
layer combining Clifford propagation, one- and two-qubit Pauli-generated rotations, explicit key-dependent rules for parametrized controlled rotations, and composite gates obtained recursively from the verified primitive set within a single Pauli-frame formalism. We establish a global correctness theorem showing that decryption with the final QOTP frame recovers the intended logical computation.

Third, we implement the compilation workflow in Qiskit and experimentally compare the plaintext and compiled encrypted circuits using noiseless simulation and gate-based IBM Quantum hardware. We additionally compare in-circuit decryption with local decryption of encrypted measurement outcomes.

Fourth, we characterize the evaluator model and the privacy boundary of the compilation process. In the delegated workflow, the QOTP frame, the logical circuit, and the correspondence between logical and physical instructions remain client-side, while the server receives only the flattened compiled circuit. Under this model, individual corrected operations are compatible with multiple logical-gate and key-frame configurations and therefore do not directly encode a unique QOTP key. Additional knowledge of the exact logical circuit, including relevant signed parameters and their correspondence with compiled instructions, can reduce this ambiguity and may reveal QOTP-frame information; this side-information setting is analyzed explicitly in Section~\ref{subsec:algorithm-aware-analysis}.

\section{Related Work} 
\label{sec:RelatedWork}

Within this broader context, previous research has demonstrated that quantum computing can also enhance cryptographic security and randomness. Regarding this second aspect, truly random keys have been successfully generated using certified quantum devices~\cite{hernandez2024true}, which overcome the limitations of classical pseudo-random entropy sources. Additionally, the feasibility of Quantum Fingerprinting has been proposed and validated in distributed systems such as blockchains, offering efficient data comparison methods and quantum-based mechanisms for verifying the integrity of blocks and transactions~\cite{hernandez2025enhancing}.

In this context, where quantum computing emerges both as a cryptographic adversary and an unexpected ally, data has become potentially vulnerable in the short and long term. As a consequence, a critical need arises: the development of a Quantum Homomorphic Encryption (QHE) framework~\cite{dulek2016quantum}. Despite substantial theoretical progress, practical hardware-oriented QFHE with strong end-to-end security guarantees remains challenging, particularly when compactness, non-Clifford evaluation, client requirements, and realistic implementation overhead are considered simultaneously, making this area one of the major open challenges in modern quantum cybersecurity.

Recent advances in quantum homomorphic encryption illustrate the dynamic intersection between quantum computation and privacy-preserving cryptography. QHE aspires to empower users with the ability to delegate quantum computations on encrypted data, potentially unlocking the full promise of quantum cloud computing without compromising confidentiality.

A significant advance in the experimental realization of QHE was achieved by Li et al.~\cite{li2024experimental}, who demonstrated a proof-of-concept implementation of quantum homomorphic operations on a compact photonic chip. By leveraging integrated photonic technology, their proof-of-principle experiment established that simple quantum circuits can indeed be evaluated on encrypted quantum data in a physical device. While this result marks a significant leap from theory to practice, it is worth noting that the demonstrated circuits remain limited in complexity, and key challenges concerning scalability and fault tolerance persist unresolved.

On the theoretical front, substantial progress has been made by Gupte and Vaikuntanathan~\cite{gupte2024construct}, who proposed a generic construction of compact quantum fully homomorphic encryption from classical fully homomorphic encryption with decryption in $\mathsf{NC}^{1}$, together with dual-mode trapdoor functions. Their framework is notable for enabling classical clients to outsource quantum computations, thereby broadening the scope of practical applicability. However, the construction relies on substantially stronger cryptographic machinery than the direct QOTP key-tracked compilation considered here and is primarily developed in a leveled QFHE setting, rather than as a near-term hardware-oriented implementation.

On the algorithmic side, efforts to bridge theory and practice have produced encouraging results. Fernández and Martin-Delgado~\cite{fernandez2024implementing} conducted simulations of Grover’s search algorithm executed homomorphically under a quantum encryption scheme using quantum circuits. Their findings suggest that, at least for small circuits, the overhead introduced by encryption is manageable, and further optimizations are possible for specific scenarios, such as the unique-solution case. Nevertheless, these studies are confined to idealized environments, leaving the challenges posed by noisy, large-scale quantum hardware unaddressed.

To further tackle practical resource bottlenecks, Ortega, Fernández, and Martin-Delgado~\cite{ortega2025implementing} developed a semiclassical approach for evaluating Szegedy quantum walks in a homomorphic setting. By dynamically generating one-time pad keys during execution, they significantly reduce the exponential key management burden associated with earlier protocols. Despite these advances, their approach relies on mid-circuit measurement, qubit reset, and classical-control capabilities and therefore follows a different execution model from the static non-adaptive circuits targeted by QOTPH.

Finally, the question of efficient ciphertext retrieval within QHE frameworks has been addressed by Cheng, Chen, and Wang~\cite{cheng2025secure}, who introduced a protocol involving a trusted third party (TTP) to enhance the performance of homomorphic Grover search. While their design reduces the ancillary resources required and demonstrates feasibility in simulation, it does so at the expense of introducing additional trust assumptions and communication overhead, which could pose limitations in adversarial or distributed contexts.

At the gate level, earlier work already established several ingredients used in QOTP-based encrypted evaluation. Liang showed that QOTP-based evaluation may depend on the secret key and extended such key-dependent evaluation to arbitrary unitary transformations~\cite{liang2013symmetric}. Ma and Li later considered arbitrary single-qubit gates through a different QFHE construction based on an extension of the Pauli one-time pad~\cite{ma2022quaternion}. Cheng et al. derived QOTP-based homomorphic evaluations for additional non-Clifford gates, including controlled square-root operations and Toffoli gates~\cite{cheng2024nonclifford}. These results show that key-dependent QOTP evaluation, non-Clifford gate support, and parametrized rotations are not new in isolation. The distinct focus of QOTPH is their organization into a unified
Pauli frame compiler with explicit direct rules for parametrized controlled rotations, broad gate-level coverage, and a composable correctness specification directly tied to the implementation.

\subsection{Positioning QOTPH: the $T$-gate bottleneck and implementation-level QOTP compilation}
\label{subsec:t-gate-bottleneck-positioning}

A central technical difficulty in QOTP-based quantum homomorphic encryption is the distinction between Clifford and non-Clifford operations. Clifford gates normalize the Pauli group and therefore allow Pauli masks to be propagated by updating a classical key frame. In this case, the same physical Clifford gate can be applied to the encrypted register, while the client updates the QOTP key locally. This property is well understood and forms the algebraic basis of many QOTP-based encrypted-computation schemes.

The situation changes fundamentally for non-Clifford gates. In particular, the $T$ gate does not normalize the Pauli group. Propagating an $X$ mask through a $T$ gate produces an additional $S$-type correction depending on the secret QOTP key. This correction is not a Pauli operator and cannot be absorbed by a simple update of the QOTP bits. Consequently, the treatment of $T$ gates is the main obstruction to extending Clifford key-tracking rules into a fully homomorphic quantum encryption scheme.

Previous theoretical works address this obstruction using additional cryptographic or quantum resources. Broadbent and Jeffery introduced quantum homomorphic encryption schemes whose efficiency depends on the non-Clifford complexity of the evaluated circuit, thereby making explicit the special role of $T$ gates~\cite{broadbent2015quantum}. Dulek, Schaffner, and Speelman constructed a compact QHE scheme for polynomial-size circuits by combining QOTP encryption with classical fully homomorphic encryption and quantum gadgets that handle the non-Pauli corrections arising from $T$-gate evaluation~\cite{dulek2016quantum}. Ouyang, Tan, and Fitzsimons proposed an information-theoretically secure construction based on quantum codes, but for restricted classes of circuits with bounded non-Clifford complexity~\cite{ouyang2018quantum}. More recent generic QFHE constructions, such as that of Gupte and Vaikuntanathan, rely on powerful classical cryptographic primitives and computational assumptions to obtain stronger formal homomorphic capabilities~\cite{gupte2024construct}.

Broadbent-Jeffery and Dulek-Schaffner-Speelman target a cryptographic evaluation model in which key-dependent information must be processed without revealing the underlying QOTP key to the evaluator. Their constructions protect the QOTP key information using classical homomorphic encryption and introduce additional quantum resources to handle the corrections arising from non-Clifford operations~\cite{broadbent2015quantum,dulek2016quantum}. Their primary objective is to obtain stronger cryptographic guarantees and compact homomorphic evaluation, rather than to provide a directly executable hardware-oriented compilation layer for standard gate-based quantum processors.

The works discussed above focus primarily on cryptographic security and compact homomorphic evaluation, rather than on formulating and experimentally validating a unified Pauli-frame compiler that translates a broad library of native, parametrized, controlled, and composite logical gates into a non-adaptive physical circuit directly executable on current gate-based quantum hardware.

QOTPH addresses this implementation and compilation gap. It organizes QOTP key propagation as an explicit client-side Pauli-frame compiler for standard gate-model circuits. The client generates and retains the QOTP frame, resolves the required gate corrections locally, optionally incorporates the initial $X$-mask into computational-basis input data, and produces a non-adaptive physical circuit that can be executed directly using a conventional gate-based quantum backend. The server does not receive the QOTP frame and does not perform key updates or key-dependent decisions during execution. The framework unifies Clifford propagation, one- and two-qubit Pauli-generated rotations, explicit key-dependent rules for parametrized controlled rotations, and recursively compiled composite gates within a single composable correctness specification. To the best of our knowledge, this combination of direct controlled-rotation rules, broad gate-level coverage, client-side compilation, non-adaptive execution, and hardware validation has not previously been presented as a single QOTP compilation framework.

In the delegated workflow, the uncompiled logical circuit and the correspondence between logical and compiled instructions remain on the client side. Consequently, an observed corrected operation does not in general determine a unique QOTP key value without additional knowledge of the logical circuit. This structural ambiguity should not be interpreted as a general circuit-hiding guarantee. Blind quantum computation provides a distinct cryptographic approach to concealing a delegated computation; Universal Blind Quantum Computation (UBQC) is a canonical example~\cite{broadbent2009universal}. Its interactive measurement-based architecture and client requirements differ from the direct non-adaptive gate-based execution targeted by QOTPH, and it is therefore cited here as a complementary reference for the circuit-hiding problem rather than as part of the present construction.

\section{Technical Background}
\label{sec:Backg}

To fully appreciate the scope and contributions of this work, it is essential to first understand the foundational principles that underlie homomorphic encryption and quantum computation. While both areas have undergone substantial development independently, their intersection, quantum homomorphic encryption, remains a frontier with unique challenges and great opportunities. This section consolidates the theoretical foundations necessary to contextualize the proposed quantum homomorphic encryption framework.

\subsection{Homomorphic Encryption}

Homomorphic Encryption (HE) is a form of public-key encryption that allows computations to be performed directly on ciphertexts, in such a way that the result, once decrypted, matches the result of performing the same operations on the plaintext data. This property enables secure delegation of computation to untrusted environments, such as cloud services, without compromising data confidentiality.

Formally, a homomorphic encryption scheme is defined as a tuple of probabilistic polynomial-time algorithms:

\begin{equation}
\text{HE} = (\text{KeyGen}, \text{Enc}, \text{Eval}, \text{Dec}),
\label{eq:HE}
\end{equation}

where:
\begin{itemize}
    \item $ \text{KeyGen}(1^\lambda) \rightarrow (pk, sk) $: generates a public and secret key pair for a given security parameter $ \lambda $.
    \item $ \text{Enc}(pk, m) \rightarrow c $: encrypts a message $ m \in \mathcal{M} $ under the public key $ pk $.
    \item $ \text{Eval}(pk, f, c_1, \ldots, c_n) \rightarrow c_f $: evaluates a function $ f $ over ciphertexts $ c_i = \text{Enc}(pk, m_i) $.
    \item $ \text{Dec}(sk, c_f) \rightarrow f(m_1, \ldots, m_n) $: decrypts the final result, yielding the function $ f $ output on the original plaintexts.
\end{itemize}

A scheme is considered fully homomorphic (FHE) if it supports the evaluation of arbitrary polynomial-size circuits, including both addition and multiplication. The first fully homomorphic encryption construction was proposed by Gentry in 2009~\cite{gentry2009fully}, introducing the concept of bootstrapping to enable the evaluation of arbitrary circuits while preserving ciphertext correctness.

Since then, several optimized variants have been developed, including:
\begin{itemize}
    \item \textbf{BFV}~\cite{fan2012somewhat}: which supports exact arithmetic over integers.
    \item \textbf{BGV}~\cite{brakerski2014leveled}: efficient for leveled schemes over polynomial rings.
    \item \textbf{CKKS}~\cite{cheon2017homomorphic}: which allows approximate arithmetic over real numbers, suitable for machine learning on encrypted data.
    \item \textbf{GSW}~\cite{gentry2013homomorphic}: based on binary matrix encodings, useful for obfuscation and advanced bootstrapping.
\end{itemize}

These schemes fundamentally rely on hard problems from lattice-based cryptography, particularly Learning With Errors (LWE) and Ring-LWE, which are conjectured to be secure against quantum adversaries. The typical underlying ring is 
\begin{equation}
R_q = \mathbb{Z}_q[X]/(X^N + 1),
\end{equation}
with ciphertexts represented as tuples in $R_q^n$, and operations defined over this ring structure.

Homomorphic encryption has enabled applications such as privacy-preserving machine learning, secure multiparty computation, and encrypted database querying. However, the transition to quantum homomorphic encryption, where computations are carried out on encrypted quantum states, remains an open and active research challenge. The following sections study an implementation-oriented QOTP key-tracked compilation framework within this broader landscape.

\subsection{Quantum Computing}

Quantum computing~\cite{nielsen2010quantum} constitutes a non-classical computational model rooted in the principles of quantum mechanics~\cite{shankar2012principles}, offering a radically different approach to information processing. The basic unit of quantum information is the qubit, which exists as a normalized vector in a two-dimensional complex Hilbert space, typically represented in Dirac notation~\cite{dirac1939new} as $\vert{\psi}\rangle = \alpha\vert{0}\rangle + \beta\vert{1}\rangle$ with $|\alpha|^2 + |\beta|^2 = 1$. This allows a coherent superposition of logical states, thereby encoding richer informational content compared to classical bits.

Entanglement plays a central role as a uniquely quantum resource, enabling non-local correlations between qubits. These correlations are formally characterized by the violation of Bell inequalities~\cite{bell1964einstein} and are leveraged in algorithms such as quantum teleportation~\cite{bennett1993teleporting}, superdense coding, and Shor’s factoring algorithm. Importantly, the entangled state space grows exponentially, which underpins the computational parallelism of quantum systems.

A crucial theoretical constraint in quantum information is the no-cloning theorem, which prohibits the perfect copying of arbitrary unknown quantum states. Formally proven by Wootters and Zurek~\cite{wootters1982single} and Dieks~\cite{dieks1982communication}, this theorem imposes structural limits on quantum communication and makes many classical cryptographic primitives, like state replication or duplication-based authentication, fundamentally incompatible with quantum mechanics. However, it also enables intrinsic security advantages in protocols such as quantum key distribution (QKD)~\cite{bennett2014quantum}.

The standard computational model used for algorithm development is the quantum circuit model, where qubits are initialized in known states, manipulated via unitary gates (e.g., Hadamard, $R_z(\theta)$, and CNOT), and measured in the computational basis. This model is supported by the Solovay-Kitaev theorem~\cite{dawson2005solovay}, which ensures the universality of discrete gate sets under suitable approximations.

Recent technological advances, particularly by IBM, Google, and IonQ, have enabled the execution of quantum circuits on real hardware~\cite{arute2019quantum}. Frameworks such as Qiskit~\cite{QiskitT} (by IBM) offer high-level abstractions for quantum programming, transpilation, and simulation. These tools allow for testing of quantum cryptographic protocols, including those involving encrypted computation.

As quantum hardware and software ecosystems continue to mature, these technologies provide a fertile testing ground for experimental implementations of quantum cryptographic primitives. This includes schemes for quantum encryption, authentication, and more ambitiously, quantum homomorphic encryption (QHE), which aims to enable non-trivial operations over encrypted quantum states.

\subsection{Quantum One-Time Pad (QOTP)}
\label{subsec:QOTP}

The Quantum One-Time Pad (QOTP)~\cite{boykin2003optimal} is the quantum analogue of the classical one-time pad~\cite{shannon1949communication}, offering information-theoretic security for quantum data. First formalized by Boykin and Roychowdhury, QOTP provides perfect secrecy for arbitrary quantum states using two classical bits of key per qubit.

Formally, let $\sigma$ be an arbitrary quantum state (pure or mixed). The QOTP encryption of $\sigma$ is given by the following operation:

\begin{equation}
\rho_c = X^j Z^k \, \sigma \, Z^k X^j,
\label{eq:QOTP}
\end{equation}

where $j, k \in \{0,1\}$ are classical bits drawn uniformly at random, and $X$, $Z$ denote the Pauli operators. These operators generate the single-qubit Pauli group and form a group under composition.

The QOTP provides perfect information-theoretic secrecy: when the secret key bits are uniformly random and unknown to the adversary, the ciphertext state averaged over the key distribution is maximally mixed. In particular,

\begin{equation}
\mathbb{E}_{j,k} \left[ X^j Z^k \sigma Z^k X^j \right] = \frac{I}{2},
\end{equation}

for all single-qubit states $\sigma$. For multi-qubit systems, the key length scales linearly: $2n$ classical bits are required to encrypt an $n$-qubit state.

Despite its simplicity and strong security guarantees, the QOTP is limited to symmetric encryption and lacks support for computational manipulation of ciphertexts. In particular, arbitrary quantum gates do not generally commute with Pauli operators, making it challenging to evaluate functions over encrypted quantum data without access to the decryption key.

Nevertheless, due to its simplicity, theoretical rigor, and compatibility with circuit-based quantum computing, QOTP has become a foundational building block in more advanced quantum cryptographic schemes, including protocols for quantum authentication and secure transmission~\cite{ambainis2000private,barnum2002authentication}.

The QOTP, in its standard form, does not support homomorphic computation in the general sense. Arbitrary quantum gates do not generally preserve ciphertext structure under Pauli encryption. In particular, Pauli operators do not commute with all unitary operations, so evaluating general quantum functions on QOTP-encrypted data requires additional key-dependent corrections, auxiliary mechanisms, or an extended homomorphic construction. However, for restricted classes of operations (e.g., Clifford gates), it is possible to analytically track key evolution under encryption, which forms the basis for interactive or assisted computation models~\cite{ouyang2018quantum}.

In this work, we adopt the QOTP as a starting point for constructing a key-tracked client-side compilation framework for encrypted quantum computation. By leveraging its symmetric encryption structure, we explore how to enable specific classes of homomorphic operations while preserving QOTP-based data hiding at the level of the encrypted quantum register and correctness under key-tracked compilation.

For Clifford gates, this key tracking can be performed by updating the Pauli frame. For non-Clifford gates, and in particular for the $T$ gate, Pauli masks are not preserved under conjugation. This obstruction motivates the client-side key-dependent compilation strategy introduced in Section~\ref{sec:Approach}.

\section{Proposed Homomorphic QOTP Scheme}
\label{sec:Approach}

The primary challenge in realizing quantum homomorphic encryption lies in reconciling two seemingly conflicting requirements: the preservation of information-theoretic security for quantum data and the ability to perform non-trivial quantum computations directly on encrypted states. Existing quantum encryption primitives, such as the Quantum One-Time Pad (QOTP), provide strong guarantees of secrecy but are not, in their standard form, compatible with general-purpose encrypted computation due to the non-commutative nature of quantum operations and Pauli operators.

In this section, we present a systematic approach to bridge this gap. Our methodology leverages the structural symmetries of the QOTP and exploits the algebraic properties of quantum gates under Pauli conjugation. By decomposing quantum circuits into elementary gate operations, we enable a homomorphic evaluation model wherein quantum operations are applied to encrypted data, and encryption keys are adaptively updated to track and preserve the semantic equivalence of the computation. Our framework is designed to preserve the QOTP encryption frame throughout the computation, while allowing the desired logical operation to be recovered after decryption. In the construction considered here, all key-dependent gate adaptations are performed locally by the client during circuit compilation, and the resulting circuit is then executed non-interactively by the quantum server. The delegated setting considered throughout the remainder of this work assumes an honest-but-curious evaluator: the server executes the submitted circuit as specified and may inspect the submitted classical circuit description and returned measurement data, while the QOTP frame, the uncompiled logical circuit, and the logical-to-physical instruction mapping remain client-side. Accordingly, the security claims of QOTPH concern the confidentiality of the QOTP-encrypted quantum register and the ambiguity of key-dependent compiled operations under this hidden-logical-circuit model. Circuit hiding, key hiding in the presence of additional side information, integrity, and verifiability are outside the scope of the present construction and are discussed separately in Section~\ref{sec:security}.

\subsection{Foundations of Homomorphic QOTP}

To enable homomorphic quantum computation, we take the QOTP as our cryptographic foundation. The QOTP, when used naively, is homomorphic only with respect to a limited subset of quantum operations, specifically those within the Clifford group, owing to their well-understood commutation relations with Pauli operators. However, supporting broad gate-model computation, including non-Clifford gates, requires a more sophisticated strategy.

Our solution is to analytically characterize how a broad gate set, including Clifford, non-Clifford, controlled, and parametrized operations, transforms QOTP-encrypted states. For each gate, we derive explicit rules for how the encryption keys must be updated to ensure semantic preservation of the computation. Where direct homomorphic application is not possible, we decompose gates into equivalent sequences for which the key-update rules can be tracked. This approach extends QOTP key-tracking ideas to a broad class of gate-model operations and supports non-adaptive remote execution after client-side compilation.

More concretely, let $\rho$ denote an encrypted quantum state under QOTP, specified by key bits $(j,k)$ per qubit, as $\rho = X^j Z^k \sigma Z^k X^j$ for some plaintext state $\sigma$. When a quantum gate $G$ is applied to $\rho$, the result is generally $G\rho G^\dagger$, which may not retain the QOTP encryption structure unless $G$ is compatible with Pauli propagation.

For the classical computational-basis inputs considered in the implementation, the initial $X$-mask can be incorporated directly into the encoded input bits. Specifically, the client may prepare $m_i\oplus j_i$ instead of $m_i$, so that the explicit initial $X^{j_i}$ gates need not appear in the submitted circuit. The initial $Z$-mask contributes only an irrelevant global phase to a computational-basis input and may therefore be retained only in the client-side QOTP frame when this optimization is used. 

As an alternative preparation mechanism supported by QOTPH, the client may construct a temporary local circuit containing the input preparation and the QOTP $X$- and $Z$-mask operations, extract the resulting encrypted statevector, discard the temporary circuit, and initialize a fresh execution circuit from the encrypted state before appending the homomorphically compiled operations. Only this fresh circuit is submitted to the server, so the original input-encoding and QOTP-encryption gates are not present as separate identifiable layers in the submitted circuit.

For Clifford gates, this compatibility allows the server to apply the same logical gate while the client updates the QOTP key frame locally. For non-Clifford gates and parametrized rotations, the physical gate applied in the compiled circuit may depend on the current QOTP key. In the present construction, these key-dependent adaptations are computed locally by the client before execution. The server then executes the submitted circuit non-adaptively and returns the corresponding measurement outcomes.

This methodology provides a systematic prescription for implementing QOTP key-tracked encrypted computation over Clifford, non-Clifford, controlled, and parametrized operations. The construction should be understood as a client-side compilation framework for non-interactive remote execution, with data privacy separated from full privacy of the submitted circuit description.

\subsubsection{Clifford propagation and the non-Clifford obstruction}
\label{subsec:clifford-nonclifford-obstruction}

A central distinction in QOTP-based homomorphic evaluation is the difference between Clifford and non-Clifford operations. Let a single-qubit QOTP encryption be written as
\begin{equation}
\mathrm{Enc}_{j,k}(\sigma)
=
X^{j} Z^{k} \sigma Z^{k} X^{j},
\label{eq:qotp-single-qubit-encryption}
\end{equation}
where $j,k\in\{0,1\}$. For any Clifford gate $C$, the Pauli group is normalized by conjugation. Therefore, for every pair $(j,k)$, there exist updated bits $(j',k')$ and an irrelevant global phase $\omega$ such that
\begin{equation}
C X^{j} Z^{k}
=
\omega X^{j'} Z^{k'} C.
\label{eq:clifford-pauli-propagation}
\end{equation}
Consequently, the server may apply the same logical Clifford gate $C$ to the encrypted register, while the client only updates the classical QOTP key frame locally.

This property no longer holds for non-Clifford gates. In particular, the $T$ gate does not normalize the Pauli group. While
\begin{equation}
T Z T^{\dagger}
=
Z,
\label{eq:t-z-t}
\end{equation}
one has
\begin{equation}
T X T^{\dagger}
=
e^{-i\pi/4} S X.
\label{eq:t-x-t}
\end{equation}
Thus, the propagation of an $X$-mask through a $T$ gate produces an additional $S$-type correction depending on the secret QOTP bit $j$. This correction is not a Pauli mask and therefore cannot be absorbed by updating only the QOTP bits $(j,k)$. This is the standard algebraic obstruction that makes non-Clifford gates the main difficulty in QOTP-based quantum homomorphic encryption.

In the present work, this obstruction is handled by client-side key-dependent compilation. The quantum server is not asked to compute the correction, update the key frame, or choose gates based on the secret key. Instead, the client constructs the physical circuit locally. For $R_z$-type rotations, the client uses the identity
\begin{equation}
R_z\left((-1)^j\theta\right) X^{j} Z^{k}
=
X^{j} Z^{k} R_z(\theta),
\label{eq:rz-key-dependent-correction}
\end{equation}
up to a global phase, and therefore replaces the logical rotation $R_z(\theta)$ by the corrected physical rotation $R_z(\theta')$, with
\begin{equation}
\theta'
=
(-1)^j\theta.
\label{eq:rz-corrected-angle}
\end{equation}
The $T$ and $T^{\dagger}$ gates are then treated as the special cases $\theta=\pi/4$ and $\theta=-\pi/4$, respectively, up to global phases.

This makes the execution phase non-interactive: once the circuit has been locally compiled by the client, the server only executes the submitted gate sequence and returns measurement outcomes. The resulting classical circuit description may contain key-dependent corrected angles, whose interpretation within the evaluator model is analyzed in Section~\ref{sec:security}.

\subsection{Homomorphic Gate Decompositions}

The practical realization of this approach relies on explicit gate-by-gate rules for homomorphic evaluation and key update. Table~\ref{homomorphic-decompositions} summarizes the homomorphic implementation of a broad set of elementary and controlled gates. For each operation, we indicate either a direct homomorphic equivalent, a decomposition into basic gates compatible with key tracking, or a specific update prescription for the QOTP keys. 

These rules collectively form the operational backbone of our homomorphic evaluation protocol, maintaining the QOTP key frame throughout the computation and enabling correctness validation after decryption. Notably, this approach covers a broad gate set, including Clifford, non-Clifford, controlled, and parametrized operations, enabling the experimental study of a wide range of gate-based circuits. The explicit key update rules allow the remote execution phase to remain non-interactive, since all key-dependent adaptations are performed locally by the client during circuit compilation. The server is therefore not required to update keys, choose corrected gates, or perform adaptive operations during execution.

Importantly, Table~\ref{homomorphic-decompositions} should be interpreted as a client-side compilation table. Entries involving corrected angles or conditional decompositions are computed by the client using the secret QOTP key frame before the circuit is submitted to the quantum server.

In Table~\ref{homomorphic-decompositions}, gate sequences are listed in chronological execution order from left to right: the leftmost gate is appended and executed first. In operator identities and formal proofs, the standard convention in which the rightmost operator acts first is used.

\begin{table}[h!]

\centering

\caption{\bf Client-side QOTPH gate decompositions and key-frame updates}

\small

\begin{tabular}{p{1.5cm}p{5.2cm}}

\hline

Gate & Client-side compiled implementation \\

\hline
$X$ & $X$, update: none \\
$Y$ & $Y$, update: none \\
$Z$ & $Z$, update: none \\
$H$ & $H$, update: $(j,k)\gets(k,j)$ \\
$R_x(\theta)$ & $H\cdot R_z(\theta')\cdot H$, where $\theta'=(-1)^k\theta$ \\
$R_y(\theta)$ & $R_y(\theta')$, where $\theta'=(-1)^{j\oplus k}\theta$ \\
$R_z(\theta)$ & $R_z(\theta')$, where $\theta'=(-1)^j\theta$ \\
$S$ & $S$, update: $k\gets k\oplus j$ \\
$S^\dagger$ & $R_z(\theta')$, where $\theta'=(-1)^j(-\pi/2)$ \\
$T$ & $R_z(\theta')$, where $\theta'=(-1)^j(\pi/4)$ \\
$T^\dagger$ & $R_z(\theta')$, where $\theta'=(-1)^j(-\pi/4)$ \\
$\mathrm{CX}$ & $\mathrm{CX}$, update: $j_t\gets j_t\oplus j_c$, $k_c\gets k_c\oplus k_t$ \\
$\mathrm{CY}$ & $\mathrm{CY}$, frame update from $S^\dagger\cdot\mathrm{CX}\cdot S$ \\
$\mathrm{CZ}$ & $\mathrm{CZ}$, update: $k_c\gets k_c\oplus j_t$, $k_t\gets k_t\oplus j_c$ \\
$\mathrm{CH}$ & $R_y(\pi/4)\cdot \mathrm{CX}\cdot R_y(-\pi/4)$, update: each gate update \\
$\mathrm{CS}$ & $T^{(c)}\cdot T^{(t)}\cdot \mathrm{CX}_{c,t}\cdot (T^\dagger)^{(t)}\cdot \mathrm{CX}_{c,t}$, update: each gate update; the net final key frame is unchanged \\
$CR_z(\pi/4)$ & $R_z^{(t)}(\pi/8)\cdot \mathrm{CX}_{c,t}\cdot R_z^{(t)}(-\pi/8)\cdot \mathrm{CX}_{c,t}$, update: each gate update \\
$CR_x(\theta)$ & if $j_c=0$: $CR_x(\theta')$; if $j_c=1$: $R_x(\theta')\cdot CR_x(-\theta')$ on $(c,t)$, where $\theta'=(-1)^{k_t}\theta$ \\
$CR_y(\theta)$ & if $j_c=0$: $CR_y(\theta')$; if $j_c=1$: $R_y(\theta')\cdot CR_y(-\theta')$ on $(c,t)$, where $\theta'=(-1)^{j_t\oplus k_t}\theta$ \\
$CR_z(\theta)$ & if $j_c=0$: $CR_z(\theta')$; if $j_c=1$: $R_z(\theta')\cdot CR_z(-\theta')$ on $(c,t)$, where $\theta'=(-1)^{j_t}\theta$ \\
$RXX(\theta)$ & $RXX(\theta')$ on $(p,q)$, where $\theta'=(-1)^{k_p\oplus k_q}\theta$ \\
$RYY(\theta)$ & $RYY(\theta')$ on $(p,q)$, where $\theta'=(-1)^{j_p\oplus k_p\oplus j_q\oplus k_q}\theta$ \\
$RZZ(\theta)$ & $RZZ(\theta')$ on $(p,q)$, where $\theta'=(-1)^{j_p\oplus j_q}\theta$ \\
$\mathrm{CXX}$ & $\mathrm{CX}_{c,t_1}\cdot \mathrm{CX}_{c,t_2}$, update: each gate update \\
$\sqrt{X}$ & $\sqrt{X}$, update: $j\gets j\oplus k$ \\
$\mathrm{SWAP}$ & $\mathrm{SWAP}$, exchange: $K_1\leftrightarrow K_2$ \\
$\mathrm{CCX}$ & $R_z^{(c)}(\pi/2)\cdot \sqrt{X}^{(c)}\cdot R_z^{(c)}(\pi/2)\cdot \mathrm{CX}(b,c)\cdot R_z^{(c)}(7\pi/4)\cdot \mathrm{CX}(a,c)\cdot R_z^{(c)}(\pi/4)\cdot \mathrm{CX}(b,c)\cdot R_z^{(c)}(7\pi/4)\cdot \mathrm{CX}(a,c)\cdot R_z^{(b)}(\pi/4)\cdot R_z^{(c)}(\pi/4)\cdot R_z^{(c)}(\pi/2)\cdot \sqrt{X}^{(c)}\cdot R_z^{(c)}(\pi/2)\cdot \mathrm{CX}(a,b)\cdot R_z^{(a)}(\pi/4)\cdot R_z^{(b)}(7\pi/4)\cdot \mathrm{CX}(a,b)$, update: each gate update \\
$\mathrm{CCZ}$ & $H\cdot \mathrm{CCX}\cdot H$, update: each gate update \\
Fredkin & $\mathrm{CX}_{t_2,t_1}\cdot \mathrm{CCX}_{c,t_1,t_2}\cdot \mathrm{CX}_{t_2,t_1}$, update: each gate update \\
\hline
\end{tabular}
\label{homomorphic-decompositions}
\end{table}

This methodology preserves the QOTP encryption frame throughout the compiled computation and facilitates automated tracking of the encryption keys during circuit compilation. As we show in the following sections, these rules are readily translatable into code, enabling efficient realization of quantum homomorphic evaluation.

The server is not given the logical angle $\theta$, the uncompiled logical circuit, or the correspondence between logical and compiled instructions. Consequently, an observed gate $R_z(\alpha)$ is consistent both with a logical gate $R_z(\alpha)$ under $j=0$ and with a logical gate $R_z(-\alpha)$ under $j=1$. Likewise, a physical gate subsequence may originate from the logical circuit itself or from a compiler-introduced correction or decomposition. The compiled description therefore does not directly identify a unique QOTP key.
This is a non-unique-inversion property under the hidden-logical-circuit model, rather than a general cryptographic indistinguishability guarantee. If the exact logical circuit, including the relevant signed logical parameters and the correspondence between logical and compiled instructions, is known, corrected rotation signs and conditional decompositions may reveal key-frame bits or parities. Knowledge of the algorithmic family alone is generally weaker because instance-dependent or variational parameter signs may remain unknown; this distinction is analyzed in Section~\ref{subsec:algorithm-aware-analysis}.

\subsection{Formal Correctness of the QOTPH Compiler}
\label{subsec:formal-correctness}

For an $n$-qubit QOTP frame
$K=(\mathbf{j},\mathbf{k})\in\mathbb{F}_2^{2n}$, define
\begin{equation}
P_K
=
\bigotimes_{i=1}^{n} X_i^{j_i} Z_i^{k_i}.
\label{eq:formal-pauli-frame}
\end{equation}
Let the logical circuit be
\begin{equation}
\mathcal{C}
=
G_L G_{L-1}\cdots G_1,
\label{eq:formal-logical-circuit}
\end{equation}
where $G_1$ acts first. Starting from the initial frame $K_0$, the client-side compiler processes each logical gate according to
\begin{equation}
\left(
\widetilde{G}_{\ell},K_{\ell}
\right)
=
\operatorname{CompileGate}
\left(
G_{\ell},K_{\ell-1}
\right).
\label{eq:formal-compile-gate}
\end{equation}

Let $\mathcal{G}_{\mathrm{ver}}$ denote the verified primitive gate set described in Appendix~\ref{app:correctness-proofs}. A logical gate may also be supported through an exact decomposition into gates from $\mathcal{G}_{\mathrm{ver}}$. In either case, the compiled operation $\widetilde{G}_{\ell}$ and the updated frame $K_{\ell}$ must satisfy
\begin{equation}
\widetilde{G}_{\ell}P_{K_{\ell-1}}
=
e^{i\phi_{\ell}}
P_{K_{\ell}}G_{\ell},
\label{eq:local-compilation-correctness}
\end{equation}
where $\phi_{\ell}\in\mathbb{R}$ is a physically irrelevant global phase.

\begin{theorem}[Correctness of QOTPH compilation]
\label{thm:qotph-correctness}
Let
\begin{equation}
\widetilde{\mathcal{C}}_{K_0}
=
\widetilde{G}_L
\widetilde{G}_{L-1}
\cdots
\widetilde{G}_1
\end{equation}
be the physical circuit produced from $\mathcal{C}=G_L\cdots G_1$ and the initial QOTP frame $K_0$.
Assume that, for every $\ell\in\{1,\ldots,L\}$, the corresponding compiled operation and updated frame satisfy Eq.~\eqref{eq:local-compilation-correctness}. Then
\begin{equation}
\widetilde{\mathcal{C}}_{K_0}P_{K_0}
=
e^{i\Phi}P_{K_L}\mathcal{C},
\qquad
\Phi
=
\sum_{\ell=1}^{L}\phi_{\ell}.
\label{eq:global-compilation-correctness}
\end{equation}
Consequently, for every input density operator $\rho$,
\begin{equation}
\begin{aligned}
&
\widetilde{\mathcal{C}}_{K_0}
P_{K_0}\rho P_{K_0}^{\dagger}
\widetilde{\mathcal{C}}_{K_0}^{\dagger}
 =
P_{K_L}
\mathcal{C}\rho\mathcal{C}^{\dagger}
P_{K_L}^{\dagger}.
\end{aligned}
\label{eq:global-density-correctness}
\end{equation}
Therefore, applying $P_{K_L}^{\dagger}$ to the final encrypted register recovers the logical output $\mathcal{C}\rho\mathcal{C}^{\dagger}$.
\end{theorem}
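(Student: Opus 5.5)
The plan is to prove the operator identity in Eq.~\eqref{eq:global-compilation-correctness} by induction on the number of compiled gates, and then obtain the density-operator statement by conjugation. The only ingredient is the local correctness condition Eq.~\eqref{eq:local-compilation-correctness}, assumed for every $\ell$. For $0\le m\le L$, define the partial circuits $\widetilde{\mathcal{C}}^{(m)}=\widetilde{G}_m\cdots\widetilde{G}_1$ and $\mathcal{C}^{(m)}=G_m\cdots G_1$, with the empty product equal to the identity. The inductive claim is
\begin{equation}
\widetilde{\mathcal{C}}^{(m)}P_{K_0}=e^{i\Phi_m}P_{K_m}\mathcal{C}^{(m)},\qquad \Phi_m=\sum_{\ell=1}^{m}\phi_\ell .
\end{equation}
The base case $m=0$ is trivial.

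For the inductive step, write $\widetilde{\mathcal{C}}^{(m)}P_{K_0}=\widetilde{G}_m\bigl(\widetilde{\mathcal{C}}^{(m-1)}P_{K_0}\bigr)$. Applying the hypothesis gives $e^{i\Phi_{m-1}}\widetilde{G}_mP_{K_{m-1}}\mathcal{C}^{(m-1)}$. Eq.~\eqref{eq:local-compilation-correctness} then replaces $\widetilde{G}_mP_{K_{m-1}}$ by $e^{i\phi_m}P_{K_m}G_m$, which yields $e^{i\Phi_m}P_{K_m}G_m\mathcal{C}^{(m-1)}=e^{i\Phi_m}P_{K_m}\mathcal{C}^{(m)}$. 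Setting $m=L$ gives Eq.~\eqref{eq:global-compilation-correctness}.

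The one point needing care is ordering. Each frame $K_\ell$ is produced by \textrm{CompileGate} from $K_{\ell-1}$, so the local identity must be applied with exactly the frame current at step $\ell$. The telescoping works precisely because the output frame of step $\ell$ is the input frame of step $\ell+1$.

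A second point concerns gates that are supported only through an exact decomposition into $\mathcal{G}_{\mathrm{ver}}$. Here I would either treat each primitive in the decomposition as its own step, which lengthens the sequence without changing the argument. Alternatively, I would note that the same induction applied to the decomposition shows the composite satisfies Eq.~\eqref{eq:local-compilation-correctness}, with phase equal to the sum of its primitives' phases.

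For Eq.~\eqref{eq:global-density-correctness}, multiply Eq.~\eqref{eq:global-compilation-correctness} on the right by $\rho$ and by the adjoint of the identity, $P_{K_0}^\dagger\widetilde{\mathcal{C}}_{K_0}^\dagger=e^{-i\Phi}\mathcal{C}^\dagger P_{K_L}^\dagger$. The phases $e^{i\Phi}$ and $e^{-i\Phi}$ cancel because $\Phi$ is real.

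The final recovery statement uses unitarity of $P_{K_L}$, since tensor products of Paulis are unitary and Hermitian. Conjugating by $P_{K_L}^\dagger$ gives $P_{K_L}^\dagger P_{K_L}\mathcal{C}\rho\mathcal{C}^\dagger P_{K_L}^\dagger P_{K_L}=\mathcal{C}\rho\mathcal{C}^\dagger$.

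No step is a genuine obstacle. The substantive content, namely the case-by-case verification that each primitive rule satisfies Eq.~\eqref{eq:local-compilation-correctness}, is assumed in the theorem and delegated to the appendix.
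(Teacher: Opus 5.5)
Your proof is correct and matches the paper's argument: induction on circuit length, left-multiplying by the next compiled gate and applying the local identity, then conjugating $\rho$ so the real global phase cancels. One small slip is that $P_K$ is unitary but not Hermitian in general, since $XZ=-iY$ is anti-Hermitian; you only use unitarity, so nothing breaks.
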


\begin{proof}
The proof proceeds by induction over the circuit length. For the first gate, Eq.~\eqref{eq:local-compilation-correctness} gives
\begin{equation}
\widetilde{G}_1P_{K_0}
=
e^{i\phi_1}P_{K_1}G_1.
\end{equation}
Assume that the identity holds after the first $m$ gates, with
\begin{equation}
\Phi_m
=
\sum_{\ell=1}^{m}\phi_\ell,
\end{equation}
namely,
\begin{equation}
\widetilde{G}_m\cdots\widetilde{G}_1P_{K_0}
=
e^{i\Phi_m}
P_{K_m}G_m\cdots G_1.
\label{eq:qotph-induction-hypothesis}
\end{equation}
Multiplying Eq.~\eqref{eq:qotph-induction-hypothesis} from the
left by $\widetilde{G}_{m+1}$ gives
\begin{align}
&
\widetilde{G}_{m+1}
\widetilde{G}_m\cdots
\widetilde{G}_1P_{K_0}
\nonumber\\
&\quad =
e^{i\Phi_m}
\widetilde{G}_{m+1}
P_{K_m}
G_m\cdots G_1
\nonumber\\
&\quad =
e^{i(\Phi_m+\phi_{m+1})}
P_{K_{m+1}}
G_{m+1}G_m\cdots G_1,
\end{align}
where the final equality follows from the local identity for $G_{m+1}$. The operator identity therefore holds for every circuit length $L$.

Applying the operator identity to $\rho$ and multiplying by its adjoint gives Eq.~\eqref{eq:global-density-correctness}. The global phase cancels from the density operator.
\end{proof}

Write the final QOTP frame as
\begin{equation}
K_L
=
\left(
\mathbf{j}^{\mathrm{final}},
\mathbf{k}^{\mathrm{final}}
\right).
\end{equation}

For computational-basis measurements, if the server returns an encrypted bitstring $x$, local decryption only requires the final $X$-mask:
\begin{equation}
y
=
x\oplus\mathbf{j}^{\mathrm{final}}.
\label{eq:classical-output-decryption}
\end{equation}
The final $Z$-mask does not modify computational-basis measurement probabilities.

The XOR in Eq.~\eqref{eq:classical-output-decryption} is applied componentwise after expressing the measured bitstring and the final QOTP frame in the same qubit-order convention. Any additional final qubit permutation must be inverted separately.

Theorem~\ref{thm:qotph-correctness} reduces the correctness of the complete compiled circuit to the verification of the corresponding single-gate identities. For Clifford gates, these identities follow from Pauli conjugation and produce classical key-frame updates. For Pauli-generated rotations, the client corrects the physical rotation
angle according to the commutation parity of the current QOTP frame.
Controlled rotations are handled through their key-dependent compilation rules, whereas higher-level operations are reduced to exact decompositions into verified primitive gates. The following CNOT derivation illustrates the Clifford case explicitly, while the remaining primitive identities and the composition argument are provided in Appendix~\ref{app:correctness-proofs}.

\subsection{QOTP key propagation through CNOT}
\label{subsec:cnot-key-propagation-example}

As a practical example of how QOTP key-frame update rules are obtained, we derive the propagation rule for a CNOT gate with control qubit $c$ and target qubit $t$. This example illustrates the general method used for Clifford gates: the physical gate applied to the encrypted register is the same logical gate, while the client updates the classical QOTP key frame locally.

Let the two-qubit QOTP mask before the CNOT gate be
\begin{equation}
X_c^{j_c} Z_c^{k_c} \otimes X_t^{j_t} Z_t^{k_t},
\end{equation}
where $(j_c,k_c)$ and $(j_t,k_t)$ are the QOTP key bits associated with the control and target qubits, respectively. Equivalently, this mask can be written as a product of embedded single-qubit Pauli operators:
\begin{equation}
\begin{aligned}
X_c^{j_c} Z_c^{k_c} \otimes X_t^{j_t} Z_t^{k_t}
&=
(X_c\otimes I_t)^{j_c}
(Z_c\otimes I_t)^{k_c}
\\
&\quad \times
(I_c\otimes X_t)^{j_t}
(I_c\otimes Z_t)^{k_t}.
\end{aligned}
\end{equation}

Using the standard CNOT conjugation identities,
\begin{equation}
\begin{aligned}
\mathrm{CX}_{c,t}(X_c\otimes I_t)\mathrm{CX}_{c,t} &= X_c\otimes X_t,\\
\mathrm{CX}_{c,t}(Z_c\otimes I_t)\mathrm{CX}_{c,t} &= Z_c\otimes I_t,\\
\mathrm{CX}_{c,t}(I_c\otimes X_t)\mathrm{CX}_{c,t} &= I_c\otimes X_t,\\
\mathrm{CX}_{c,t}(I_c\otimes Z_t)\mathrm{CX}_{c,t} &= Z_c\otimes Z_t,
\end{aligned}
\end{equation}

we obtain, up to an irrelevant global phase,
\begin{equation}
\begin{aligned}
&\mathrm{CX}_{c,t}
\left(
X_c^{j_c}Z_c^{k_c}\otimes X_t^{j_t}Z_t^{k_t}
\right)
\mathrm{CX}_{c,t}
\\
&\qquad =
\pm
\left(
X_c^{j_c}Z_c^{k_c\oplus k_t}
\otimes
X_t^{j_c\oplus j_t}Z_t^{k_t}
\right).
\end{aligned}
\end{equation}

Therefore, after applying a CNOT gate to the encrypted register, the updated QOTP key frame is
\begin{equation}
\begin{aligned}
j_c' &= j_c,\\
k_c' &= k_c\oplus k_t,\\
j_t' &= j_t\oplus j_c,\\
k_t' &= k_t.
\end{aligned}
\end{equation}
Equivalently, the key update can be written as
\begin{equation}
j_t \gets j_t\oplus j_c,
\qquad
k_c \gets k_c\oplus k_t,
\end{equation}
with $j_c$ and $k_t$ unchanged.

This derivation illustrates why Clifford gates are compatible with QOTP key tracking: the CNOT gate maps Pauli masks to Pauli masks under conjugation, so the encryption frame can be updated purely classically by the client. More complex Clifford decompositions are handled by composing the corresponding primitive update rules gate by gate. Non-Clifford gates, in contrast, require the key-dependent compilation procedure described above.

\subsection{The Protocol}

Specifically, the protocol proceeds as follows:  
Given a set of classical input bits $(m_1, ..., m_n)$, a quantum circuit $\mathcal{C}$ (defined as an ordered list of gates), and a desired number of executions $N$, the implementation first prepares the quantum state $\vert{\psi}\rangle = \vert{m_1 \ldots m_n}\rangle$ by initializing the circuit accordingly. 

Subsequently, for each qubit, two uniformly random bits $(a_i,b_i)$ are generated and the corresponding $X^{a_i}Z^{b_i}$ operators are applied; this constitutes the quantum one-time pad (QOTP) encryption. The encrypted state is then processed gate by gate according to $\mathcal{C}$. For each logical gate, the implementation inserts its compiled encrypted-domain version or decomposition, as specified in Table~\ref{homomorphic-decompositions}, and updates the QOTP frame according to the corresponding algebraic rule.

After homomorphic evaluation, QOTPH supports two result-handling workflows. In the in-circuit validation workflow, the client appends the final QOTP decryption layer before submitting the circuit, and the server executes the complete circuit and returns plaintext measurement outcomes. This variant is used only for correctness validation and exposes the final key-dependent decryption layer to the server.

In the delegated local-decryption workflow, the final decryption layer is omitted from the submitted circuit. The server measures the encrypted output register and returns encrypted outcome counts. The client then decrypts the outcomes locally using the final QOTP frame $K_L$.

The protocol proceeds step by step as follows:

\begin{enumerate}
    \item The protocol involves two entities: a client and a quantum server. The client retains the QOTP key frame and performs all key-dependent compilation locally, while the server receives the circuit submitted for execution.
    \item Local key generation: The client locally generates a secret key $K$. This key remains exclusively on the client side throughout the protocol.
    \item The client designs the quantum circuit to be executed on the server. This circuit typically includes the encoded input data, the encryption procedure, and the quantum operations to be applied.
    \item Quantum circuit preparation: During circuit construction, after encoding the input data and applying the encryption, when entering the operations, each quantum gate $G_\ell$ is replaced by a corrected gate $G'_\ell$. This corrected gate is determined by the encryption key computed by the client. The corresponding decryption key is also updated for each adjusted gate introduced. At no point does the server need to modify the gates or access the keys; the client locally constructs the complete homomorphic quantum circuit to be executed by the server.
    \item The client sends the prepared quantum circuit to the quantum provider for execution. Before submission, the client transpiles the circuit to the backend basis and removes logical-gate labels, barriers, compiler annotations, key-update records, and logical-to-physical instruction metadata.
    \item The quantum server returns the measurement results obtained from executing the submitted circuit.
    \item The client locally decrypts the returned results using the decryption key determined during circuit construction, thereby obtaining a histogram of the decrypted outcomes.
\end{enumerate}

The submitted circuit is a compiled physical representation of the client-side homomorphic construction. In particular, the server receives only the circuit to be executed and does not receive the QOTP keys, does not update the key frame, and does not choose key-dependent corrections during execution. Depending on the backend and compilation pipeline, the submitted circuit may undergo further transpilation before execution; however, we do not rely on transpilation as a cryptographic hiding mechanism.

In this work, the quantum provider is modeled as an honest-but-curious evaluator: it is assumed to execute the submitted circuit as specified, while potentially having access to the classical circuit description and to the returned measurement data. Under this model, the protocol provides QOTP-based protection of the encrypted quantum data and, in the local-decryption version, of the measured outcomes. Full privacy of the submitted circuit description is treated as a separate requirement that can be addressed by additional circuit-hiding layers, while integrity and verifiability can be addressed by separate verification mechanisms; these extensions are outside the scope of the basic QOTPH construction.

\section{Implementation of the Proposed Scheme}
\label{sec:Implementation}

In this section, we detail the implementation of the proposed QOTP-based key-tracked client-side compilation framework. The implementation constructs encrypted quantum circuits under the QOTP scheme, applies the corresponding client-side gate corrections and key-frame updates, and evaluates the resulting compiled circuits on quantum simulators and gate-based hardware.

\subsection{Experimental Setup}

The proposed procedure is implemented programmatically as a sequence of Qiskit circuit constructions and manipulations, where encryption, gate application, key tracking, and decryption are carried out in strict accordance with the protocol described in the pseudocode included below.

The implementation is divided into two main steps. The first, Quantum Homomorphic Evaluation Core, is responsible for managing the QOTP encryption, applying the gate decompositions along with the corresponding key updates, and executing the homomorphic evaluation of the circuit in a non-interactive manner. The second one, Decryption and Result Extraction, performs the final decryption either within the circuit for immediate result analysis or via post-processing of the encrypted outputs for local decryption.

The experiments were performed using Qiskit version 2.1.1 and the IBM Quantum backend \texttt{ibm\_basquecountry}. Circuits were transpiled using Qiskit's preset pass manager at optimization level 3, targeting the selected backend, before submission with SamplerV2. Backend calibration parameters vary over time; the reported results therefore correspond to the device conditions present at the time of execution.

\subsection{Algorithmic Specification}

The implemented and fully functional version of the proposed protocol is formalized in Algorithm~\ref{alg:qhe}. The algorithm explicitly describes the complete operational workflow of the homomorphic protocol, from encrypted state preparation to in-circuit decryption and measurement. The implementation is designed as a modular and extensible Qiskit-based system that is agnostic to the specific quantum backend. 

The implemented protocol is organized into three conceptually distinct stages. First, the client performs all QOTP encryption, key-frame tracking, and key-dependent gate compilation locally. This stage produces a compiled physical circuit whose gates already include the required QOTP-dependent corrections. Second, the quantum server receives only this compiled circuit and executes it non-adaptively, without access to the QOTP keys and without performing any key update. Third, the client decrypts the returned encrypted measurement outcomes locally using the final QOTP key frame.

This separation is important for the security model considered in this work. All key-dependent transformations are computed on the client side before execution, while the server-side task is limited to executing the submitted circuit and returning measurement statistics.

Throughout the protocol, each QOTP key is denoted by
\[
K_i=(j_i,k_i),
\]
where \(j_i\) represents the \(X\)-mask and \(k_i\) represents the \(Z\)-mask acting on qubit \(i\).
The encrypted state is therefore evaluated gate by gate while the key frame \(K=\{K_i\}_{i=1}^n\) is dynamically tracked.

We write the circuit as an ordered gate sequence
\[
\mathcal{C}=\bigl\{(G_\ell,Q_\ell,\Theta_\ell)\bigr\}_{\ell=1}^{L},
\]
where \(G_\ell\) denotes the \(\ell\)-th logical gate, \(Q_\ell\) is the ordered set of qubits on which it acts, and \(\Theta_\ell\) contains its continuous parameters, if any.

Algorithm~\ref{alg:qhe} is primarily used as an in-circuit decryption variant for correctness validation. In an untrusted delegated setting, the final QOTP decryption layer should not be submitted to the server, since it depends on the secret key frame. The delegated encrypted-output variant is therefore the local-decryption workflow described in Algorithm~\ref{alg:qhe-local-circuit-decrypt}.

\begin{breakablealgorithm}

\caption{QOTPH quantum homomorphic evaluation protocol with client-side compilation and in-circuit decryption}
\small
\label{alg:qhe}
\begin{algorithmic}[1]
\Procedure{HomomorphicEvaluate}{$m_1,\ldots,m_n$ classical bits, $\mathcal{C}$ gate sequence, $N$ shots}
\Statex \textbf{Execution-domain convention:}
\Statex Steps 1-4 are performed by the client during local circuit construction and compilation.
\Statex Step 5 is performed by the external quantum server after receiving the compiled circuit.
\Statex The QOTP key frame $K$ is generated, updated, and stored only on the client side.
\Statex The quantum server does not receive $K$ and does not perform key-frame updates.
\State{\bf{1. State Preparation}}
\State Prepare $\vert{\psi}\rangle \gets \vert{m_1 \ldots m_n}\rangle$
\Comment{Initialize quantum state from classical input data}

\State{\bf{2. QOTP Encryption}}
\For{$i = 1$ to $n$}
\State $(j_i,k_i) \gets$ uniformly random bits in $\{0,1\}^2$
\State Apply $X^{j_i}Z^{k_i}$ to qubit $i$ of $\vert{\psi}\rangle$
\EndFor
\State $K \gets \{(j_i,k_i)\}_{i=1}^n$
\State $\vert{\psi}_{\mathrm{enc}}\rangle \gets \vert{\psi}\rangle$

\State{\bf{3. Homomorphic Gate Evaluation -- client-side compilation}}
\For{$\ell = 1$ to $L$}
  \State Let $(G_\ell,Q_\ell,\Theta_\ell)$ be the $\ell$-th instruction of $\mathcal{C}$
  \State $K \gets \Call{EvaluateGate}{G_\ell,Q_\ell,\Theta_\ell,K}$
\EndFor

\State{\bf{4. QOTP Decryption}}
\For{$i = 1$ to $n$}
\State $(j_i,k_i) \gets K_i$
\State Apply $Z^{k_i}X^{j_i}$ to qubit $i$ of $\vert{\psi}_{\mathrm{enc}}\rangle$
\EndFor
\State $\vert{\psi}_{\mathrm{out}}\rangle \gets \vert{\psi}_{\mathrm{enc}}\rangle$

\State{\bf{5. Quantum Execution and Result Extraction -- server-side execution}}
\State Measure all qubits of $\vert{\psi}_{\mathrm{out}}\rangle$ and repeat the execution $N$ times
\State Collect measurement statistics $\{p(x):x\in\{0,1\}^n\}$
\State \Return measurement statistics

\EndProcedure
\end{algorithmic}
\end{breakablealgorithm}

For the delegated computational-basis workflow, the client prepares the masked bitstring $\vert{\mathbf{m}\oplus\mathbf{j}}\rangle$ directly, rather than submitting a separate visible layer of initial $X$-mask gates. The initial $Z$-mask is retained only in the client-side frame, since it contributes a global phase to the initial basis state. The explicit QOTP gate layer shown in Algorithm~\ref{alg:qhe} is used for correctness validation.

The homomorphic evaluation step is implemented by a client-side gate-compilation routine that applies the rules summarized in Table~\ref{homomorphic-decompositions}. Since the full routine is lengthy and mainly serves to specify implementation details, the complete pseudocode is provided in Appendix~\ref{app:gate-compilation-rules}. This appendix includes the full gate-dispatch routine, the primitive QOTP key-frame update rules, and the composite-gate decompositions used by the implementation.

In the main text, we emphasize the conceptual structure: all key-dependent angle corrections, decompositions, and key-frame updates are computed locally by the client before the compiled circuit is submitted to the quantum server. Each logical operation is represented as \((G_\ell,Q_\ell,\Theta_\ell)\), where \(G_\ell\) is the gate, \(Q_\ell\) its ordered qubit support, and \(\Theta_\ell\) its parameters when applicable. Depending on the gate, the encrypted-domain transformation is performed through a key-frame update, an angle correction, or an explicit decomposition. Pure parametrized rotations are evaluated by correcting their effective angles according to the current encryption mask and do not introduce an independent key-frame update. When a parametrized operation is implemented through a decomposition, the intermediate elementary gates are handled recursively by the same client-side compilation rules.

A key feature of this implementation is the integration of the key-update logic into the client-side compilation layer. At each logical gate, the corresponding transformation rule is applied locally by the client, so that the QOTP encryption frame is consistently maintained while the compiled physical circuit is constructed. This allows the user to specify a circuit at the logical level without manually compensating for the encryption masks, since all angle corrections, key updates, and gate decompositions are internally handled by the evaluation engine.

The external quantum server does not run the key-update logic and does not receive the QOTP key frame. Its role is limited to executing the compiled circuit submitted by the client and returning measurement statistics. Thus, from the user's perspective, the implementation remains automated, while the distinction between client-side compilation and server-side execution is kept explicit.

The modular design also facilitates adaptation to new circuit architectures, alternative encryption schemes, or different quantum hardware backends. Extending the framework only requires adding new transformation rules or decomposition routines to the evaluation layer. Encryption, homomorphic evaluation, decryption, and measurement are therefore orchestrated as a single workflow, requiring no user intervention once the input circuit has been specified. This architecture enables automated experimentation and correctness validation for QOTP-based encrypted computation.

\subsection{Encrypted Result Handling and Local Decryption}

To enable local decryption of the results, we introduce a second workflow in which measurement outcomes are returned in encrypted form and decrypted by the client after execution. For this purpose, the results are collected while still encrypted; that is, the decryption step is intentionally omitted from the circuit executed on the quantum device or simulator. Using either a quantum simulator or an actual quantum processor, these encrypted measurement outcomes can be retrieved and subsequently decrypted locally by the client. 

Algorithm~\ref{alg:qhe-local-circuit-decrypt} describes this local-decryption workflow. For each encrypted bitstring outcome and its frequency count, the client constructs a local decryption circuit that prepares the corresponding computational-basis state.

\begin{breakablealgorithm}

\caption{QOTPH Quantum Homomorphic Evaluation Protocol with Local Decryption}
\label{alg:qhe-local-circuit-decrypt}
\begin{algorithmic}[1]
\small
\Procedure{HomomorphicEvaluate}{$m_1, ..., m_n$ (classical bits), $\mathcal{C}$ (gate sequence), $N$ (shots)}

\Statex \textbf{Execution-domain convention:}
\Statex Steps 1-3 are performed by the client during local circuit construction and compilation.
\Statex Step 4 is performed by the external quantum server after receiving the compiled circuit.
\Statex The QOTP key frame $K$ is generated, updated, and stored only on the client side.
\Statex The quantum server does not receive $K$ and does not perform key-frame updates.

  \State{\bf{1. State Preparation}}
  \State Prepare $\vert{\psi}\rangle \gets \vert{m_1 \ldots m_n}\rangle$ \Comment{Initialize quantum state from data}
  \State{\bf{2. QOTP Encryption}}
  \For{$i = 1$ to $n$}
    \State $(a_i, b_i) \gets$ random bits in $\{0,1\}^2$
    \State Apply $X^{a_i} Z^{b_i}$ to qubit $i$ of $\vert{\psi}\rangle$
  \EndFor
  \State $K \gets \{(a_i, b_i)\}_{i=1}^n$
  \State $\vert{\psi}_{\mathrm{enc}}\rangle \gets \vert{\psi}\rangle$
\State{\bf{3. Homomorphic Gate Evaluation -- client-side compilation}}
\For{$\ell = 1$ to $L$}
  \State Let $(G_\ell,Q_\ell,\Theta_\ell)$ be the $\ell$-th instruction of $\mathcal{C}$
  \State $K \gets \Call{EvaluateGate}{G_\ell,Q_\ell,\Theta_\ell,K}$
\EndFor
  \State{\bf{4. Quantum Execution and Result Extraction -- server-side execution}}
  \State Measure all qubits of the final encrypted register $\vert{\psi}_{\mathrm{enc,out}}\rangle$, repeat $N$ times
  \State Obtain encrypted measurement statistics $\{p(x) : x \in \{0,1\}^n\}$
  \State Send encrypted measurement statistics $\{p(x):x\in\{0,1\}^n\}$ to the client
  \State {\bf 5. Local Quantum Circuit QOTP Decryption -- client-side post-processing}
  \For{each bitstring $x$ with nonzero count}
    \State $qc_{\mathrm{dec}} \gets$ New quantum circuit of $n$ qubits
    \State Prepare state $\vert{x}\rangle$ in $qc_{\mathrm{dec}}$
    \Statex \Comment{Local decryption uses the final QOTP key frame after all homomorphic key updates}
    \For{$i = 1$ to $n$}
      \State $(a_i^{\mathrm{final}}, b_i^{\mathrm{final}}) \gets K_i$
      \State Apply $Z^{b_i^{\mathrm{final}}} X^{a_i^{\mathrm{final}}}$ to qubit $i$ of $qc_{\mathrm{dec}}$
    \EndFor
    \State Measure all qubits of $qc_{\mathrm{dec}}$ (once or multiple times)
    \State Collect post-decryption bitstring $y$ (can repeat per count)
    \State $p_{\mathrm{dec}}(y)\ \gets$  Accumulate count into $y$ in decrypted histogram weighted by the number of times the encrypted result appeared
  \EndFor
  \State \textbf{return} decrypted measurement statistics $\{p_{\mathrm{dec}}(y)\}$, $K$
\EndProcedure
\end{algorithmic}
\end{breakablealgorithm}

The same optional computational-basis input-masking optimization described for Algorithm~\ref{alg:qhe} applies to this workflow.

The QOTP decryption operation is then applied to this state by acting with the corresponding $Z^{b_i} X^{a_i}$ operators on each qubit, using the final QOTP key frame obtained after homomorphic key tracking. The circuit is measured, yielding the decrypted output. By repeating this process for all distinct encrypted bitstrings (and weighting each decrypted outcome by its observed frequency), the final decrypted probability distribution is reconstructed, matching what would have been obtained had the decryption been performed on the quantum device itself.

This makes it possible to decrypt measured classical outcomes locally, keeping the plaintext output labels unavailable to the quantum backend before client-side decryption. In Section~\ref{sec:Results}, we decrypt the results using both a local simulator and an external quantum computer, which serves as an example of a local quantum processor.

\subsection{Generate keys using QTRNG}

In Algorithm~\ref{alg:qhe} and Algorithm~\ref{alg:qhe-local-circuit-decrypt} the keys that we are going to use are generated; the generic way to do this is included in point 2 (QOTP Encryption) of the pseudocode.

This random number generator can be replaced by a quantum random number generator (QTRNG)~\cite{hernandez2024true}, which can provide quantum-origin entropy for generating the QOTP key bits. In practice, standard randomness validation and extraction procedures may be required depending on the physical source.

In an ideal implementation, repeated measurement of qubits prepared in the Hadamard basis produces unbiased random bits. In practice, QRNG outputs may require calibration, statistical validation, and randomness extraction to mitigate device bias. The resulting bits can then be used by the client as QOTP key material, provided they are kept secret.

\subsection{Introduction of Swap Gates}

In this model, the introduction of swap gates before measurement is fully supported. These gates exchange the position of the qubits within the circuit.

As a result, the association between measured bitstrings and logical output labels is randomized by an additional permutation layer. This may make direct interpretation less immediate, but it should be understood as heuristic output-label randomization rather than as a formal cryptographic security guarantee.

First, we randomly introduce swap gates before the measurements (so as not to affect the operations previously applied to the circuit). Second, and finally, we must undo the transformation in the decryption circuit just before the decryption gates. This way the correct clear result is obtained. The graphical representation of the created quantum circuit is shown in Section~\ref{sec:Results}.

When a final random qubit permutation $\pi$ is used, the client first
decrypts the measured labels with the final physical QOTP frame and
then restores the logical qubit order:
\begin{equation}
y
=
\pi^{-1}
\left(
x\oplus\mathbf{j}^{\mathrm{final}}
\right).
\end{equation}

\subsection{
Diagrams of the Proposed Algorithms}

This section shows the workflows of the algorithms described previously in Figures~\ref{fig:w1},\ref{fig:w2}. As can be seen, the overall flow varies considerably between the two algorithms.

Algorithm~\ref{alg:qhe} illustrates the in-circuit decryption workflow used for correctness validation. As we will see later in the Results Section~\ref{sec:Results}, the operations are preserved against encryption. 
In contrast, in Algorithm~\ref{alg:qhe-local-circuit-decrypt}, the measured outcomes returned by the server remain encrypted and are decrypted locally by the client using the final QOTP key frame. As shown in the flowchart, if an adversary only observes the returned measurement outcomes and does not know the final QOTP key frame, the outcomes remain encrypted under the local-decryption workflow.

\onecolumn

\begin{figure}[H]
\centering
\fbox{\includegraphics[width=0.99\linewidth]{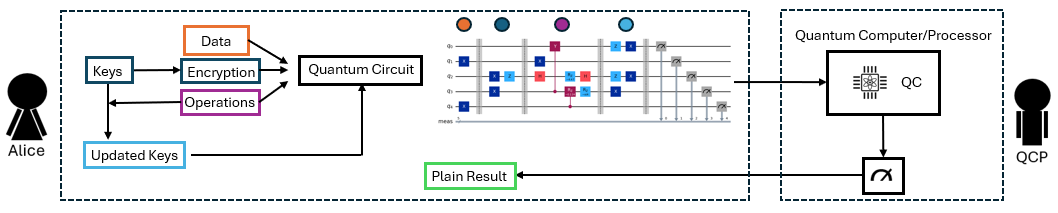}}
\caption{
Algorithm~\ref{alg:qhe} workflow.}

\label{fig:w1}
\end{figure}

\begin{figure}[H]
\centering
\fbox{\includegraphics[width=0.99\linewidth]{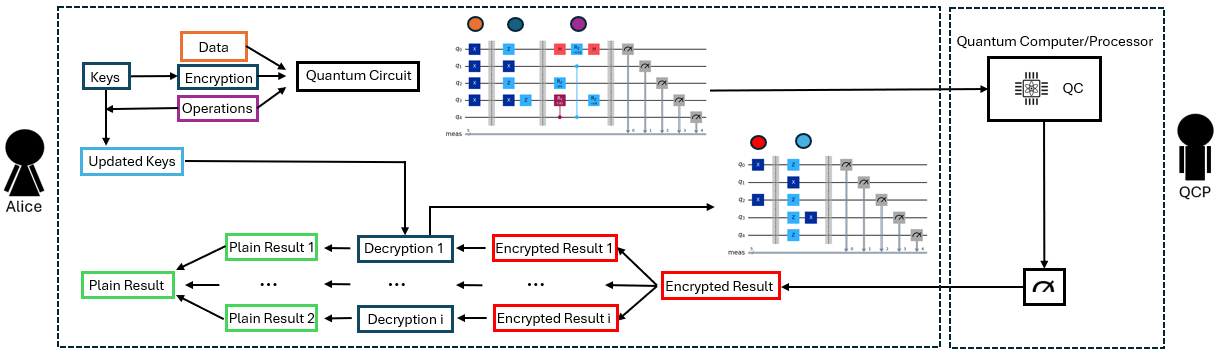}}
\caption{
Algorithm~\ref{alg:qhe-local-circuit-decrypt} workflow.}

\label{fig:w2}
\end{figure}

\twocolumn
Figures~\ref{fig:w1} and~\ref{fig:w2} include example quantum circuits to help understand how the complete Algorithms~\ref{alg:qhe} and~\ref{alg:qhe-local-circuit-decrypt} work, detailing each step, how the circuits are built, and where they are applied. The representation itself includes colors to facilitate the understanding of the different elements of the quantum circuit.
\vspace{0.3cm}

\section{Experimental Validation of Homomorphic QOTP}
\label{sec:Results}

This section presents the experimental validation of the proposed QOTP-based key-tracked compilation framework. All results are derived from executions on both real IBM Quantum devices and noiseless quantum simulators. Specifically, we employed IBM Basquecountry, a 156-qubit superconducting quantum processing unit (QPU) featuring the Heron r2 processor, thereby providing a comprehensive comparison between idealized and hardware-constrained environments. The primary focus is to assess correctness, output-distribution fidelity, and hardware behavior of the compiled encrypted circuits under realistic quantum noise and to quantify the practical gap between simulated and physical quantum computation.
In quantum computing, fidelity-type metrics are commonly used to quantify how similar two output distributions are. Since the experimental outputs considered here are classical histograms obtained from computational-basis measurements, we compare the ideal and homomorphically evaluated output distributions using the Hellinger fidelity. For two probability distributions $P=\{p_i\}$ and $Q=\{q_i\}$, we define
\begin{equation}
F_H(P,Q)
=
\left(
\sum_i \sqrt{p_i q_i}
\right)^2.
\label{eq:hellinger-fidelity}
\end{equation}
This quantity satisfies $0\leq F_H(P,Q)\leq 1$, with $F_H(P,Q)=1$ corresponding to identical distributions. Therefore, values close to one indicate strong agreement between the original and homomorphically evaluated circuits.

Our evaluation is divided into two core experiments: (i) Algorithm~\ref{alg:qhe}, where quantum one-time pad (QOTP) decryption is performed in-circuit on the quantum device prior to measurement, and (ii) Algorithm~\ref{alg:qhe-local-circuit-decrypt}, where quantum circuits return encrypted outputs and all decryption is subsequently performed by quantum circuits, after measurement. For both algorithms, we consider two circuits, one the original and the other its homomorphic version. We calculate the average Hellinger fidelity between the output distributions of the original and homomorphic circuits, allowing for an objective evaluation of correctness, fidelity, and hardware scaling behavior in the tested regimes.

All circuits used in the experiments were generated randomly from the key-tracked gate set summarized in Table~\ref{homomorphic-decompositions}, with gates applied in random order to randomly chosen qubits. This design samples a range of circuit instances while keeping circuit depth moderate and within current hardware capabilities. Due to the external queueing and execution time variability of IBM Quantum hardware, total execution times are not directly comparable or reproducible.

For each hardware configuration reported in the tables, approximately 100 independently generated random circuit instances were evaluated. For each simulated configuration, more than 50,000 independently generated random circuit instances were evaluated. Each circuit instance was executed using the number of shots specified in the corresponding table, and the reported values are the empirical mean Hellinger fidelities over the resulting circuit instances.

\subsection{QOTPH with In-Circuit Decryption}

Table~\ref{tab:algorithm1-fidelity} shows the average Hellinger fidelity obtained from applying the homomorphic evaluation protocol (Algorithm~\ref{alg:qhe}) with decryption performed directly on quantum hardware, immediately before measurement. Experiments were executed on IBM Quantum devices as well as noiseless Qiskit simulators. Fidelities correspond to the overlap between the output distributions of the original circuit and the homomorphic one.

\begin{table}[htbp]
\centering
\caption{\bf Average Hellinger Fidelity of Experimental results for Algorithm~\ref{alg:qhe}}
\begin{tabular}{c c c c c}
\hline
\textbf{Qubits} & \textbf{Gates} & \textbf{Shots} & \textbf{Simulator} & \textbf{QC} \\
\hline
5 & 5 & 10000 & 0.9999 & 0.9912 \\  
5 & 10 & 15000 & 0.9999 & 0.9866 \\ 
5 & 15 & 20000 & 0.9999 & 0.9792 \\ 
10 & 5 & 30000 & 0.9999 & 0.9904 \\ 
10 & 10 & 30000 & 0.9999 & 0.9798 \\ 
10 & 15 & 30000 & 0.9999 & 0.9691 \\ 
20 & 5 & 50000 & - & 0.9938 \\ 
20 & 10 & 50000 & - & 0.9739 \\ 
20 & 15 & 50000 & - & 0.9728 \\ 
40 & 5 & 50000 & - & 0.9726 \\ 
40 & 10 & 50000 & - & 0.9613 \\ 
40 & 15 & 50000 & - & 0.9474 \\ 
\hline
\end{tabular}
\label{tab:algorithm1-fidelity}

\end{table}

The results demonstrate that the protocol preserves correct logical computation and enables successful homomorphic encryption on real quantum hardware, as evidenced by the high fidelity values observed for 5, 10, 20 and 40-qubit circuits. As expected, simulators consistently achieve near-perfect fidelity ($>0.999$), indicating that the limiting factor in hardware experiments is not the protocol itself, but rather intrinsic device noise (decoherence, gate errors, and readout imperfections). Despite these hardware limitations, all measured fidelities remain above $0.94$ for the tested configurations, consistent with the noise levels typically observed in current quantum devices. These results support the practical implementability of the scheme for moderate circuit widths on current superconducting quantum processors.

The following Figure~\ref{fig:fid1} shows the graphical representation of the fidelity results obtained as a function of the qubits and the number of quantum gates.

\begin{figure}[h!]
\centering
\fbox{\includegraphics[width=0.95\linewidth]{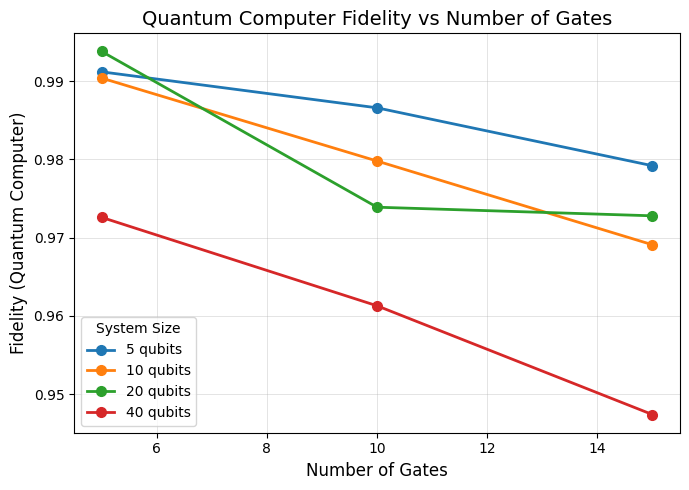}}
\caption{
Fidelity of the results obtained for Algorithm~\ref{alg:qhe}.}

\label{fig:fid1}
\end{figure}

As can be seen, the results show how fidelity decreases depending on the number of qubits and the number of initial gates applied (it should be noted that when the circuit is sent to the quantum computer, the number of gates increases during transpilation depending on the quantum gates supported by the computer). The decreases in fidelity are proportional to what is expected based on the accumulated errors of the quantum computer, remaining above 0.94.

\subsection{Image Based Comparison: Original and Homomorphic Circuits of Algorithm~\ref{alg:qhe}}

Figure~\ref{fig:circuit-comparison} presents a representative example for comparing the structure of an original quantum circuit and its corresponding homomorphic version, generated according to the protocol implementation described in Section~\ref{sec:Implementation}. The top panel depicts a standard circuit, while the bottom panel shows its QOTP-encrypted homomorphic equivalent, including key-dependent gate modifications and QOTP encryption/decryption layers.

\begin{figure}[h!]
\centering
\fbox{\includegraphics[width=0.90\linewidth]{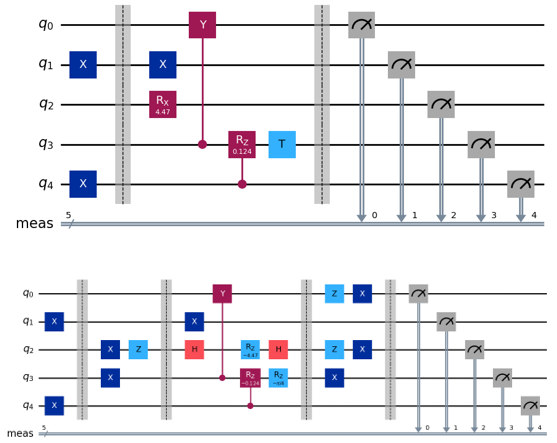}}
\caption{
Visual comparison of a standard quantum circuit and its QOTP-based homomorphic version.}

\label{fig:circuit-comparison}
\end{figure}

When evaluating the quantum circuit on external hardware, the circuit to be sent to it is the one we can see in the lower part of Figure~\ref{fig:circuit-comparison}, since the original is shown as a comparative example. As can be seen, the depth of the homomorphic circuit increases depending on the gates present in the original circuit. This will be discussed in the following sections.

\subsection{QOTPH with Encrypted Output and Local Decryption}

Table~\ref{tab:algorithm2-fidelity} presents the results of executing Algorithm~\ref{alg:qhe-local-circuit-decrypt} with measurement performed on encrypted quantum states, followed by local decryption using the updated QOTPH keys. In this approach, the quantum computer returns encrypted bitstrings and all decryption is performed locally. After decryption, the output distributions closely agree with those obtained using Algorithm~\ref{alg:qhe}, according to the Hellinger fidelity metric, validating the correctness of the key-tracking and local-decryption workflow.

\begin{table}[h!]
\centering
\caption{\bf Average Hellinger Fidelity of Experimental Results for Algorithm~\ref{alg:qhe-local-circuit-decrypt}}
\begin{tabular}{c c c c c}
\hline
\textbf{Qubits} & \textbf{Gates} & \textbf{Shots} & \textbf{Simulator} & \textbf{QC} \\
\hline
5 & 5 & 20000 & 0.9999 & 0.9887 \\ 
5 & 10 & 20000 & 0.9999 & 0.9907 \\
5 & 15 & 20000 & 0.9999 & 0.9802 \\
10 & 5 & 30000 & 0.9999 & 0.9724 \\ 
10 & 10 & 30000 & 0.9999 & 0.9684 \\ 
10 & 15 & 30000 & 0.9999 & 0.9631 \\
20 & 5 & 50000 & - & 0.9391 \\ 
20 & 10 & 50000 & - & 0.9394 \\ 
20 & 15 & 50000 & - & 0.9335 \\ 
\hline
\end{tabular}
\label{tab:algorithm2-fidelity}

\end{table}

The results in Table~\ref{tab:algorithm2-fidelity} show the average Hellinger fidelity of Algorithm~\ref{alg:qhe-local-circuit-decrypt} after applying the local decryption process. After local decryption, the output distributions match those obtained with in-circuit decryption, validating the correctness of the key-tracking and decryption procedures. These experiments demonstrate that encrypted measurement outcomes can be locally decrypted using the final QOTP key frame.

\begin{figure}[ht!]
\centering
\fbox{\includegraphics[width=0.95\linewidth]{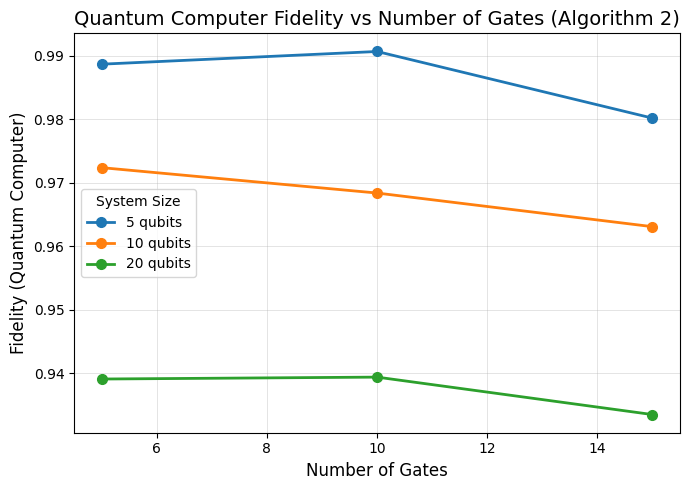}}
\caption{
Fidelity of the results obtained for Algorithm~\ref{alg:qhe-local-circuit-decrypt}.}

\label{fig:fid2}
\end{figure}

In Figures~\ref{fig:fid1} and~\ref{fig:fid2}, markers represent the
experimentally evaluated gate counts ($5$, $10$, and $15$ gates).
Connecting segments are included only as visual guides and do not
represent additional measurements, interpolation, or extrapolation.

Figure~\ref{fig:fid2} shows the fidelity as a function of the number of qubits and initial gates. As in the previous subsection, fidelity decreases as circuit size increases, while remaining above 0.93 for 20 qubits despite the execution of two circuits.

\subsection{Image Based Comparison: Original and Homomorphic Circuits of Algorithm~\ref{alg:qhe-local-circuit-decrypt}}

Figure \ref{fig:circuit-comparison2} presents three quantum circuits. The first one corresponds to the original circuit used as a reference. The second shows its homomorphic counterpart, where the evaluation is performed without embedding the decryption stage. Finally, the third circuit illustrates the decryption procedure applied to one of the states produced after the measurements performed on the external quantum computer.

\begin{figure}[h!]
\centering
\fbox{\includegraphics[width=0.86\linewidth]{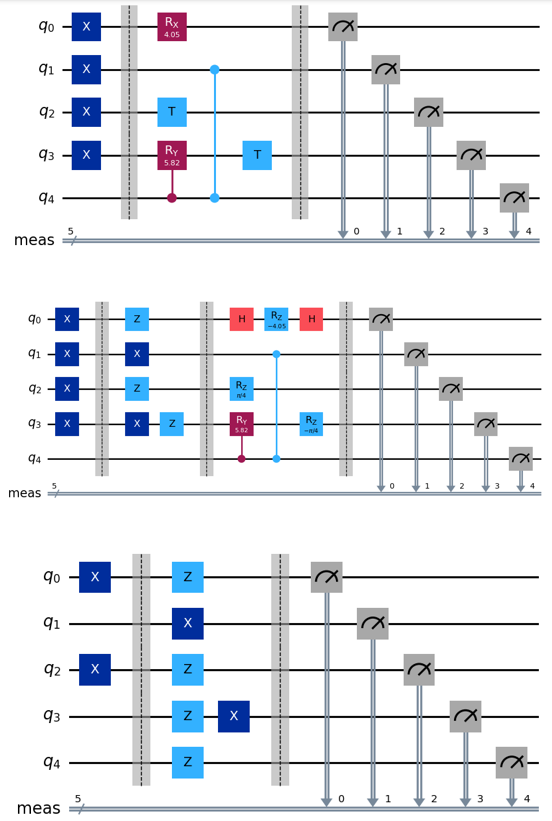}}
\caption{
Visual comparison of a standard quantum circuit, its QOTP-based homomorphic version, and the local decryption circuit of one of the resulting states of Algorithm~\ref{alg:qhe-local-circuit-decrypt}.}

\label{fig:circuit-comparison2}
\end{figure}

When evaluating the quantum circuit on external hardware, the circuit to be sent to it is the one we can see in the middle part of Figure~\ref{fig:circuit-comparison2}, since the original is shown as a comparative example.

\subsection{Image Based Comparison: Original and Homomorphic Circuits of Algorithm~\ref{alg:qhe-local-circuit-decrypt} With Random Swap Gates}

The quantum circuit of Algorithm~\ref{alg:qhe-local-circuit-decrypt} is shown graphically in Figure~\ref{fig:circuit-comparison2Swap} after the introduction of random swap gates as a heuristic output-label randomization technique.

\begin{figure}[h!]
\centering
\fbox{\includegraphics[width=0.89\linewidth]{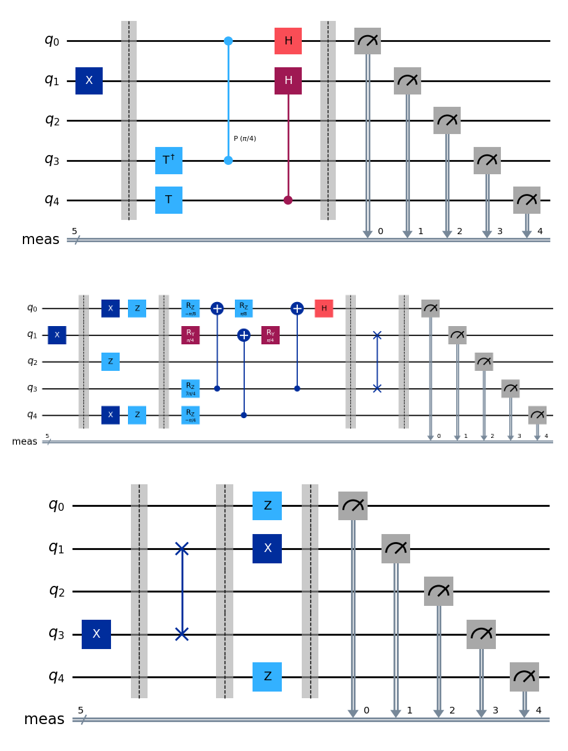}}
\caption{
Visual comparison of a standard quantum circuit, its QOTP-based homomorphic version, and the local decryption circuit of one of the resulting states of Algorithm~\ref{alg:qhe-local-circuit-decrypt} with swap gate.}

\label{fig:circuit-comparison2Swap}
\end{figure}

As previously mentioned, the introduction of swap gates provides an auxiliary output-label randomization layer, making the direct interpretation of measured bitstrings less immediate for an observer without the final qubit-label mapping. We do not treat this mechanism as a standalone cryptographic security guarantee, since the submitted circuit may still be visible to the backend. As we can see in Figure~\ref{fig:circuit-comparison2Swap}, a swap gate has been introduced and is subsequently reversed in the local decryption circuit. In this particular case, two qubits have been swapped.

The logical result is identical to that of the original workflow shown in Table~\ref{tab:algorithm2-fidelity} since the additional permutation does not affect the functioning of the model. In the ideal circuit model, the additional permutation leaves the logical result unchanged once the inverse permutation is applied during decryption. On noisy hardware, however, explicit SWAP operations may introduce additional transpilation and gate overhead and can therefore affect the measured fidelity.

\subsection{Illustrative Analysis of Hardware Fidelity as a Function of Circuit Depth}

Initially, the original quantum circuit consisted of a limited number of gates, while its homomorphic counterpart featured a somewhat greater circuit depth (depending on the original gates chosen).

Notably, simulation results consistently indicate a high fidelity close to 0.999, irrespective of increasing circuit depth. Specifically, fidelity measurements remain stable around 0.999 across all evaluated circuit depths in simulation environments. This outcome demonstrates that in an ideal, noise-free simulator, fidelity is unaffected by circuit complexity. Besides, the fidelity of real quantum computers decreases drastically with circuit depth, as indicated by the devices themselves, which exhibit a certain error depending on the applied gate.

Figure~\ref{fig:depth} provides an auxiliary illustration of the general impact of physical circuit depth on noisy quantum hardware. It compares two logically equivalent circuits with substantially different physical depths.

\begin{figure}[h!]
\centering
\fbox{\includegraphics[width=0.95\linewidth]{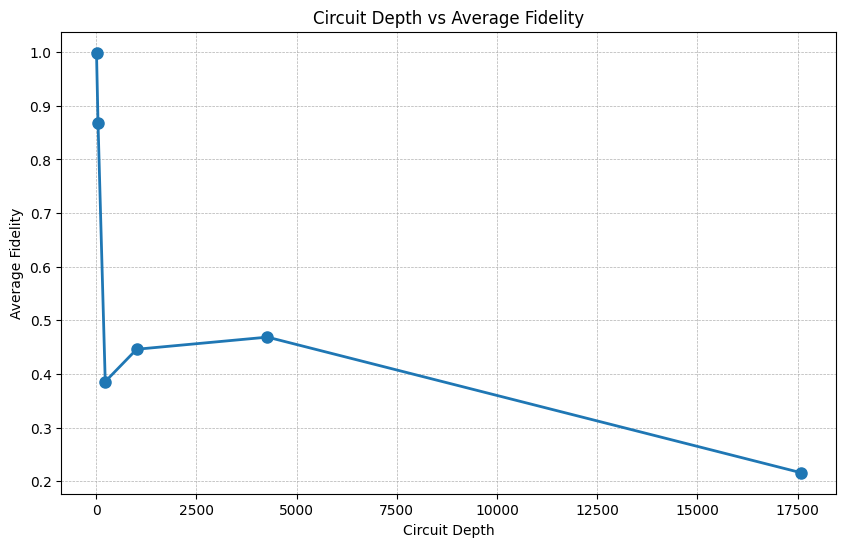}}
\caption{
Hardware output fidelity for two logically equivalent quantum circuits with different physical circuit depths. The deeper implementation contains additional $U3$ and CNOT gates.}

\label{fig:depth}
\end{figure}

This experiment is not intended as a direct benchmark of the QOTPH compiler, but rather as a hardware-level illustration of why deeper compiled implementations may experience greater accumulated noise despite being exactly equivalent in the ideal circuit model.

Consequently, these results underline that the practical scalability of the QOTPH model depends predominantly on the gate-induced noise in real quantum hardware, hence relying on future quantum hardware advances to support deeper compiled circuits with lower accumulated error (a present and well-known problem in literature~\cite{preskill2018quantum}).

\subsection{Experimental Fidelity Analysis as a Function of the Number of Qubits}

To further assess the hardware behavior of the proposed protocol as the number of qubits increases, we analyze the dependence of output fidelity on the number of qubits. Figure~\ref{fig:scaling-fidelity} displays the average Hellinger fidelity achieved for both in-circuit Algorithm~\ref{alg:qhe} and post-processed decryption Algorithm~\ref{alg:qhe-local-circuit-decrypt} as a function of the number of qubits.

\begin{figure}[h!]
\centering
\fbox{\includegraphics[width=0.97\linewidth]{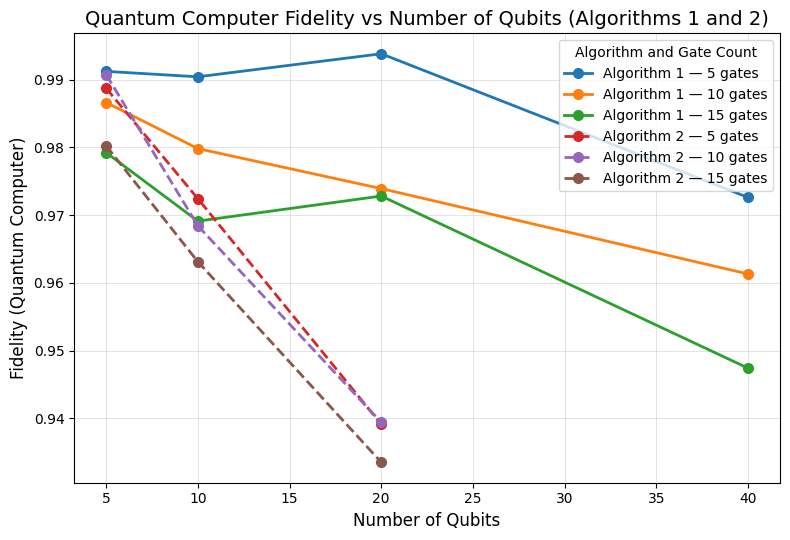}}
\caption{
Average output fidelity versus number of qubits for the quantum homomorphic protocol, for both in-circuit and local (post-processed) decryption.}
\label{fig:scaling-fidelity}
\end{figure}

The results show that, while the fidelity for hardware execution gradually decreases with increasing circuit width, the protocol maintains output fidelity above 0.93 for up to 20 qubits and up to 0.98 for 5 gates on current IBM Quantum hardware. This supports the practical feasibility of the scheme in the tested regimes, while also showing the expected degradation due to realistic device constraints. Simulated results confirm that errors induced by the protocol are negligible and that therefore the proposed QOTPH model works correctly. However, the degradation observed on hardware increases with circuit width and compiled depth and is consistent with accumulated physical noise and implementation overhead. These results support the practical feasibility of the scheme within the experimentally tested regimes.

\subsection{Key Update in QOTPH}

The following Table~\ref{tab:qotph-key-update} provides illustrative examples of how the keys are updated in QOTPH as a result of applying different sets of quantum operations. For each scenario, the table shows the initial keys, the sequence of operations applied, and the resulting updated keys.

\begin{table}[h!]
\centering
\caption{\bf Key Update Examples in QOTPH}
\scriptsize
\renewcommand{\arraystretch}{1.5}
\begin{tabular}{lp{0.7\linewidth}}
\hline
\multicolumn{2}{c}{\textbf{Example 1}} \\
\hline
Key$_{initial}$ & \{0: (1,1), 1: (1,1), 2: (0,0), 3: (0,0), 4: (1,1)\} \\
Operations   & (t,2), (cx,0,2), (cy,4,3), (ccz,1,0,4), (swap,3,2) \\
Key$_{updated}$ & \{0: (1,1), 1: (1,1), 2: (1,1), 3: (1,0), 4: (1,1)\} \\
\hline
\multicolumn{2}{c}{\textbf{Example 2}} \\
\hline
Key$_{initial}$ & \{0: (0,1), 1: (0,0), 2: (1,0), 3: (0,1), 4: (1,1)\} \\
Operations   & (s,4), (ccx,4,2,1), (cy,0,3), (crz,4,2,0.25), (swap,1,3) \\
Key$_{updated}$ & \{0: (0,0), 1: (0,1), 2: (1,0), 3: (0,0), 4: (1,0)\} \\
\hline
\multicolumn{2}{c}{\textbf{Example 3}} \\
\hline
Key$_{initial}$ & \{0: (0,1), 1: (1,0), 2: (0,0), 3: (0,0), 4: (1,0)\} \\
Operations   & (rzz,0,2,3.70), (rz,3,0.72), (cxx,1,3,4), (sqrtx,1), (tdg,0) \\
Key$_{updated}$ & \{0: (0,1), 1: (1,0), 2: (0,0), 3: (1,0), 4: (0,0)\} \\
\hline
\end{tabular}
\label{tab:qotph-key-update}
\end{table}

\newpage
\subsection{Comparative Analysis and Discussion}

The comparative results of Algorithms~\ref{alg:qhe} and~\ref{alg:qhe-local-circuit-decrypt} lead to the following observations:

\begin{itemize}
    \item \textbf{Correctness:} In all tested cases, the decrypted output distributions closely reproduce the corresponding reference distributions, both when decryption is performed in-circuit and when applied locally after encrypted measurement.

    \item \textbf{Hellinger fidelity and hardware behavior:} The average Hellinger fidelity remains high in noiseless simulation and decreases on real hardware as the number of qubits and gates increases. This behavior is consistent with accumulated hardware noise, two-qubit gate errors, and transpilation overhead. 

    \item \textbf{Local decryption:} The agreement between in-circuit and local-decryption variants confirms that the final QOTP key frame is tracked consistently and can be used to recover measured classical outcomes after encrypted execution.

    \item \textbf{Security-related interpretation:} The delegated workflow keeps the QOTP frame, logical circuit, and logical-to-physical mapping client-side. An isolated compiled operation does not directly determine a unique originating logical gate or QOTP-frame value without the corresponding signed logical information. Section~\ref{subsec:algorithm-aware-analysis} further characterizes the additional conditions under which algorithmic or instance-specific side information can reduce this ambiguity.

    \item \textbf{Practicality:} The results show that QOTP key-tracked compilation can be implemented in a standard quantum software stack and executed on current IBM Quantum hardware, although current performance remains limited by noise and circuit depth.
\end{itemize}

Overall, the experiments support the practical feasibility of the proposed QOTP key-tracked compilation framework as an implementation and correctness-validation tool. The protocol delivers high-fidelity correctness validation for the tested circuit sizes and helps quantify the practical gap between ideal QOTP key tracking and execution on current noisy quantum hardware.

\section{Security Analysis}
\label{sec:security}

The proposed QOTPH framework inherits the information-theoretic confidentiality of the Quantum One-Time Pad at the level of quantum ciphertexts. For an $n$-qubit state $\rho$ encrypted with uniformly random QOTP keys, the averaged ciphertext satisfies
\begin{equation}
\frac{1}{4^n}
\sum_{j,k\in\{0,1\}^n}
X^j Z^k \rho Z^k X^j
=
\frac{I}{2^n}.
\label{eq:qotp-twirl-security}
\end{equation}
Therefore, any party that does not know the QOTP key obtains no information about the plaintext quantum state from the encrypted quantum register alone. The client tracks the QOTP key frame gate by gate, so the evaluated register remains interpretable as a QOTP-encrypted version of the corresponding logical state. This provides information-theoretic hiding of the quantum data itself, assuming that the QOTP keys remain uniformly random and secret.

The evaluator's view consists of the encrypted quantum register and the flattened compiled circuit. In the delegated model considered here, the server is not given the uncompiled logical circuit, the QOTP frame, or the correspondence between logical and physical instructions. The compiled gate list is therefore generally compatible with multiple
logical-circuit and key-frame configurations. For example, an observed $R_z(\alpha)$ gate is compatible with either $(\theta=\alpha,j=0)$ or $(\theta=-\alpha,j=1)$.

This ambiguity means that an isolated corrected gate is not a direct encoding of a unique QOTP key bit. It does not, however, establish general key-hiding against arbitrary side information. If the exact logical circuit and the relevant signed logical parameters can be associated with the compiled instructions, corrected signs or branch structures may reveal key-frame bits or parities. The security statement of QOTPH therefore combines information-theoretic hiding of the quantum-register marginal with compiled-circuit key ambiguity under the hidden-logical-circuit model.

For computational-basis inputs, the initial $X$-mask can be incorporated into the locally prepared bitstring, avoiding a separate visible mask layer. In the local-decryption workflow, the server returns outcomes
whose labels remain masked by the final $X$-frame, and the client restores the logical labels locally.

QOTPH also supports an alternative statevector-based preparation mechanism in which the client locally constructs and encrypts the input state, extracts the resulting encrypted statevector, discards the temporary preparation circuit, and submits only a fresh circuit initialized from that encrypted state. This removes the original input-encoding and explicit QOTP-encryption gates from the evaluator's gate-level view and prevents direct recovery of the QOTP bits by simply inspecting a separate visible mask layer. However, the encrypted statevector or the state-preparation circuit synthesized from it may remain inspectable by the evaluator. This mechanism therefore hides the preparation sequence and the explicit encryption-gate layer rather than establishing general cryptographic key hiding.

When confidentiality of the submitted circuit description is additionally required, a separate circuit-hiding mechanism is necessary. Blind quantum computation constitutes an established cryptographic direction for addressing this additional requirement. Universal Blind Quantum Computation (UBQC) is a canonical example of an information-theoretically blind delegation protocol~\cite{broadbent2009universal}. Its execution model, however, differs substantially from QOTPH: UBQC relies on an interactive measurement-based protocol and additional client-side quantum-state preparation and communication requirements, whereas QOTPH targets client-side QOTP key tracking and compilation followed by direct non-adaptive execution on conventional gate-based quantum hardware. We therefore cite blind quantum computation as an established approach to the separate problem of computation hiding, rather than as a component or security property of the QOTPH protocol analyzed and experimentally validated in this work.

The low pre-decryption fidelity values observed in many instances indicate that the encrypted outcome labels differ substantially from the corresponding plaintext output labels before local decryption.
Higher values may occur when the QOTP-induced relabeling preserves some features of the distribution, for example when the output is highly concentrated or when different labels have similar probabilities. These fidelity values are not a cryptographic security parameter and do not establish an indistinguishability guarantee; rather, they provide an empirical illustration of the distinction between encrypted measurement outcomes and the logical outputs recovered through local decryption.

It is also important to note that QOTP-based encryption provides confidentiality but not integrity. A malicious evaluator could deviate from the prescribed computation by omitting gates, inserting extra operations, or performing unintended measurements. The experimental results in this work validate correctness and quantify the impact of hardware noise, but they do not by themselves provide verifiability against active adversaries.

Despite these limitations, QOTPH offers several security-relevant contributions in practice: (i) information-theoretic hiding of the quantum ciphertexts, assuming secret uniformly random keys; (ii) client-side resolution of all key-dependent gate corrections, so that the server does not receive or update the QOTP key frame; and (iii) local decryption of encrypted measurement outcomes using the final key frame. 
Accordingly, the direct QOTPH execution mode provides QOTP-based data confidentiality and client-side key-tracked compilation, while the submitted classical circuit description is not claimed to satisfy general circuit blindness. Established blind-computation protocols illustrate how this additional privacy objective can be addressed under different execution and client-resource assumptions. Integrity and verifiability against actively deviating evaluators remain separate requirements.

\subsection{Algorithm-aware key-inference analysis}
\label{subsec:algorithm-aware-analysis}

To assess whether knowledge of the algorithmic family can facilitate inference of the QOTP key, we distinguish knowledge of an algorithm from knowledge of its exact logical instance. Knowing that a circuit implements a particular quantum algorithm does not in general reveal the signed values of its logical rotation parameters, oracle-dependent operations, problem-instance data, variational parameters, or exact gate decomposition. This distinction is essential for QOTPH because the observable parameter of a compiled rotation combines the logical parameter with a QOTP-dependent sign.

For example, for a logical $R_z(\theta)$ operation, QOTPH submits a physical rotation $R_z(\alpha)$ satisfying
\begin{equation}
\alpha=(-1)^j\theta.
\end{equation}
Observation of $\alpha$ therefore reveals $j$ only when the signed logical value $\theta$ is independently known. If the evaluator knows only the algorithmic family, or knows $|\theta|$ but not its logical sign, the two explanations $(j=0,\theta=\alpha)$ and $(j=1,\theta=-\alpha)$ remain compatible with the same observed physical rotation. Analogous ambiguities apply to $R_x$, $R_y$, and Pauli-generated two-qubit rotations.

To account for unknown logical parameter signs, let $s_r\in\mathbb{F}_2$ denote the unknown sign associated with the $r$-th independent logical parameter class. Each observable key-dependent rotation then induces a binary relation involving both QOTP-frame variables and, when required, an unknown logical-sign variable. For example,
\begin{align}
R_z &: \quad b = j_q \oplus s_r,\\
R_x &: \quad b = k_q \oplus s_r,\\
R_y &: \quad b = j_q \oplus k_q \oplus s_r,\\
R_{ZZ} &: \quad b = j_p \oplus j_q \oplus s_r,
\end{align}
where $b$ denotes the observed sign of the corresponding compiled parameter relative to a chosen positive-magnitude convention.

These sign relations apply when the submitted parameter convention distinguishes $+\theta$ from $-\theta$; parameter values for which both signs are equivalent modulo gate periodicity or global phase do not provide an independent sign observation.

The resulting inference problem depends on the type of logical parameter involved. We distinguish three relevant cases. First, for free or instance-dependent parameters whose signed logical values are unknown to the evaluator, an isolated corrected rotation retains the ambiguity between the logical sign and the QOTP-dependent sign. Second, when the same unknown logical parameter is reused across several operations, comparison of the corresponding compiled signs may eliminate the common unknown sign and reveal relative QOTP-frame parities. Third, fixed-angle operations whose logical sign is determined by the recognized algorithmic structure, such as $T$ gates or fixed controlled-phase rotations, do not benefit from this parameter-sign ambiguity and may therefore expose QOTP-frame information if their correspondence with the compiled operations can be identified.

Because all QOTPH key-frame updates are linear transformations over $\mathbb{F}_2$, every current frame bit appearing in these relations can be expressed as a linear function of the initial frame $K_0$.

Collecting all observable relations yields a binary linear system
\begin{equation}
A_K K_0 \oplus A_S S = b,
\end{equation}
where $K_0\in\mathbb{F}_2^{2n}$ is the initial QOTP key and $S$ contains the unknown logical-parameter signs. The number of independent constraints obtained on the QOTP key after eliminating the nuisance sign variables is
\begin{equation}
r_K =
\operatorname{rank}
\begin{pmatrix}
A_K & A_S
\end{pmatrix}
-
\operatorname{rank}(A_S).
\label{eq:key-leakage-rank}
\end{equation}

Thus, an isolated rotation with an otherwise unknown logical sign contributes no independent key information, whereas repeated occurrences sharing the same unknown logical parameter may reveal relative QOTP-frame parities because the common sign variable cancels.

Parametrized controlled rotations require separate consideration because the QOTPH compilation rule may modify not only the sign of a rotation parameter but also the structure of the submitted gate sequence. 

If a compiled subsequence can be reliably associated with a particular logical controlled rotation, the distinction between the $j_c=0$ and $j_c=1$ compilation branches can reveal the current control $X$-frame bit independently of the unknown sign of the logical rotation parameter. Consequently, controlled rotations constitute a stronger potential leakage channel than isolated single-qubit sign corrections.

Because the number of independent key constraints depends on the specific problem instance, logical decomposition, parameter reuse, and the evaluator's ability to associate physical subsequences with logical operations, the following analysis is deliberately structural rather than quantitative. We consider only information obtainable from the submitted classical circuit description and do not use measurement outcomes as an inference channel. Table~\ref{tab:algorithm-key-inference} therefore identifies the principal theoretical key-inference mechanisms for representative algorithms when the algorithmic family is known but the complete signed logical circuit is not.

In the direct QOTPH execution model, the compiled classical circuit description is submitted directly to the evaluator. We additionally analyze the information that may be inferred when this circuit is available in clear form. In this setting, the evaluator may inspect the physical gate types, rotation parameters, gate ordering, and qubit connectivity of the submitted QOTPH circuit, and is additionally assumed to know the family of quantum algorithm being executed. No blind-computation, circuit-hiding, or additional obfuscation layer is assumed in this analysis. However, knowledge of the algorithmic family does not imply knowledge of the exact logical instance, including its signed logical parameters, problem-dependent oracles, variational parameters, or the correspondence between logical and compiled instructions. We consider only information obtainable from the submitted classical circuit description and do not use measurement outcomes as an inference channel.

Overall, Table~\ref{tab:algorithm-key-inference} shows that visibility of the compiled QOTPH circuit and knowledge of the algorithmic family do not, by themselves, generally determine the QOTP key. For Clifford-dominated constructions, such as the standard Bernstein--Vazirani circuit, the QOTPH rules introduce no direct key-dependent angle or conditional-branch information. For algorithms containing instance-dependent or variational rotations, such as VQE, QAOA, and Hamiltonian-simulation circuits, unknown signed logical parameters preserve ambiguity between logical-parameter signs and QOTP-dependent corrections, although parameter reuse may reveal relative frame parities. Stronger inference becomes possible for identifiable fixed-angle structures, such as QFT-based blocks, or for controlled rotations whose QOTPH compilation changes structurally with the current QOTP frame. Therefore, even when the submitted compiled circuit is directly inspectable and the algorithmic family is known, general key recovery requires additional information concerning the logical instance, signed parameter values, decomposition, or logical-to-compiled instruction correspondence.

\begin{table*}[!t]
\centering
\caption{\bf Structural analysis of QOTP key inference for representative quantum algorithms in the direct QOTPH execution model. The compiled circuit is assumed to be visible to the evaluator, without an additional circuit-hiding layer.}
\label{tab:algorithm-key-inference}
\scriptsize
\setlength{\tabcolsep}{1.8pt}
\renewcommand{\arraystretch}{0.9}
\begin{tabular}{p{1.5cm}p{3cm}p{4.5cm}p{8.4cm}}
\hline
Algorithm & Relevant standard circuit structure & Information not implied by algorithm-family knowledge & Additional conditions enabling QOTPH key inference \\
\hline
Bernstein--Vazirani & $H$, $X$, and CX gates in the standard oracle construction & The hidden string determines the oracle instance & The standard construction is Clifford, so the QOTPH compilation rules introduce no direct key-dependent rotation-angle or conditional-branch information. The oracle structure may reveal the problem instance, but this does not by itself reveal QOTP bits through the QOTPH correction rules. \\
Deutsch--Jozsa & Hadamard layers and an oracle $U_f$ & The selected constant or balanced function and its reversible decomposition need not be known & The fixed Hadamard layers introduce no direct key-dependent information. QOTP-frame inference requires identifiable non-Clifford, parametrized, or conditionally compiled operations in the particular implementation of the oracle. Knowledge of the Deutsch--Jozsa family alone is therefore insufficient. \\
Simon & Hadamard layers and a reversible oracle $U_f$ & The hidden-period function and its reversible implementation are instance-dependent & The fixed Hadamard layers introduce no direct key-dependent information. Any potential QOTP inference depends on the decomposition of the instance-specific oracle and on the evaluator's ability to associate its compiled subsequences with corresponding logical operations. \\
Grover search & Phase oracle and diffusion operator, commonly involving multi-controlled operations & The marked-state oracle remains instance-dependent & The recognizable diffusion operator may contain fixed-angle non-Clifford or multi-controlled decompositions. These can generate key-dependent sign or structural constraints if their logical-to-compiled correspondence can be identified. The contribution of the search oracle remains dependent on the marked-state instance and its implementation. \\
Quantum Fourier Transform & Hadamard gates, SWAPs, and controlled-phase operations with fixed structured angles & Forward versus inverse QFT and the correspondence between logical operations and compiled subsequences must still be identified & If a QFT or inverse-QFT block and its logical-to-compiled correspondence are identified, its fixed controlled-phase angles remove the unknown-logical-sign ambiguity associated with free parameters. The corresponding QOTPH-supported decompositions may then provide constraints on current QOTP-frame bits or parities. \\
Quantum Phase Estimation & Controlled powers $U^{2^k}$ followed by an inverse QFT & The unitary $U$, its decomposition, and the underlying problem instance may remain unknown & The identifiable inverse-QFT component contains fixed-angle structure and can provide QOTP-frame constraints if its compiled operations can be associated with their logical counterparts. Inference from the controlled-$U$ part additionally requires knowledge of the particular unitary and its decomposition. \\
Shor order finding & Controlled modular exponentiation followed by inverse QFT & Knowledge of Shor's algorithm does not uniquely determine the selected arithmetic implementation, logical decomposition, or necessarily all instance-dependent choices & The inverse-QFT component contains recognizable fixed-angle structure. Additional QOTP-frame constraints may arise from modular-arithmetic blocks if the arithmetic instance, decomposition, and logical-to-compiled correspondence are also independently known or reliably inferred. \\
Quantum counting & Controlled Grover iterations combined with quantum phase estimation & The search oracle remains problem-dependent & The phase-estimation and inverse-QFT components contain recognizable fixed-angle structure that may provide QOTP-frame constraints if identified. Additional inference from the Grover component depends on knowledge of the underlying oracle and its implementation. \\
Canonical quantum amplitude estimation & Controlled powers of an amplitude-amplification operator followed by inverse QFT & State preparation, the marking oracle, and the resulting amplitude-amplification operator may remain problem-dependent & In the canonical QPE-based construction, the inverse-QFT block can provide fixed-angle constraints if its correspondence is identified. Inference from the amplitude-amplification operator additionally depends on knowledge of its problem-specific state-preparation and oracle components. \\
HHL & Phase estimation of a matrix-derived unitary, eigenvalue-dependent controlled ancilla rotations, and inverse phase estimation & Matrix-dependent evolution, encoded spectral information, and the resulting controlled-rotation parameters depend on the linear-system instance & Recognizable fixed components of the phase-estimation procedure may provide structured constraints. Eigenvalue-dependent controlled rotations retain parameter ambiguity unless the corresponding problem data and signed logical parameters are independently known; their conditional QOTPH compilation can provide stronger information if the logical-to-compiled correspondence is identified. \\
VQE & Parameterized ansatz interleaved with entangling operations & Ansatz parameters are determined by a classical optimization process and their signed values are not implied by knowledge that VQE is being executed & An isolated variational rotation retains ambiguity between the unknown logical-parameter sign and the QOTP-dependent sign. QOTP-frame information may nevertheless arise when parameters are reused, when relations between parameters are independently known, or when fixed-angle non-Clifford components of the selected ansatz can be identified. \\
QAOA & Repeated cost/mixer layers with $\gamma_l$ and $\beta_l$ & Optimized signed parameters and possibly problem-Hamiltonian details & Unknown parameter signs prevent direct absolute-bit inference from isolated rotations; shared layer parameters can reveal relative frame parities, particularly when the associated Hamiltonian coefficients are known. \\
Quantum walks & Walk operators determined by the chosen graph, oracle, coin or reflection operators, and walk formulation & The graph, marked vertices, transition structure, and circuit realization may all be instance-dependent & There is no algorithm-family-wide key-inference rule for quantum walks. Potential QOTP leakage depends on whether the particular walk implementation contains identifiable fixed-angle non-Clifford operations, parametrized rotations, or conditional decompositions and whether these can be associated with their logical counterparts. \\
QSP / QSVT & Signal or block-encoded unitaries interleaved with a structured sequence of phase rotations & Phase factors are determined by the selected polynomial transformation and may or may not be independently known to the evaluator & If the phase sequence is known or can be derived from the target polynomial transformation, its signed logical angles can provide QOTP-frame constraints when associated with the compiled rotations. If the phase factors are unavailable, individual corrected rotations retain ambiguity between the logical phase sign and the QOTP-dependent correction. \\
Hamiltonian simulation / Trotterization & Pauli-generated rotations implementing exponentials of Hamiltonian terms & Hamiltonian coefficients, evolution time, and the selected decomposition may not be known & If these quantities are known, the signed logical rotation parameters may become known and the observed QOTPH-corrected angles can constrain the QOTP frame. Otherwise, isolated rotations retain ambiguity between the logical-parameter sign and the QOTP-dependent sign. \\
\end{tabular}
\end{table*}
\clearpage

\section{Conclusion}
\label{sec:Conclu}

In this work, we have developed and experimentally validated a homomorphic QOTP-based, key-tracked client-side compilation framework for encrypted quantum computation (QOTPH). The construction uses the Quantum One-Time Pad as its cryptographic foundation and tracks the evolution of the QOTP key frame under a broad set of Clifford, non-Clifford, controlled, and parametrized operations. The resulting framework allows the client to compile a corrected physical circuit locally, while the quantum server executes the submitted circuit non-adaptively and without access to the QOTP keys.

A central part of the work is the explicit treatment of the non-Clifford obstruction. While Clifford gates preserve the Pauli frame and admit direct key updates, non-Clifford gates such as $T$ require key-dependent corrections. Rather than assuming that the server can compute or choose these corrections, the proposed framework performs them during client-side compilation. This clarifies the operational role of the client and the server and separates QOTP-based data protection from full privacy of the submitted classical circuit description.

The implementation, developed in Qiskit, was evaluated using both noiseless simulators and IBM Quantum hardware. The experiments validate correctness by comparing the output distributions of the original and homomorphically compiled circuits. We also compare in-circuit decryption with local decryption of encrypted measurement outcomes, showing that the QOTP key frame can be consistently tracked and used to recover the intended measured output distribution.

The experimental results demonstrate that QOTP key-tracked compilation is practically implementable using current quantum software stacks and executable on near-term quantum hardware. Across the tested configurations, the results validate the correctness of the compilation rules, the consistency of local key tracking and output decryption, and the feasibility of executing the resulting encrypted computations on real gate-based processors, with the remaining performance degradation primarily associated with device noise, circuit depth, and gate errors.

Beyond correctness and hardware validation, we performed a structural analysis of potential QOTP key-inference channels when the evaluator has access to the compiled circuit and may know the algorithmic family being executed. The analysis shows that such knowledge does not, by itself, generally determine the QOTP key: instance-dependent or variational parameters can preserve ambiguity between logical-parameter signs and QOTP-dependent corrections. At the same time, identifiable fixed-angle structures, reused parameters, and key-dependent controlled-rotation branches can provide constraints on the QOTP frame when sufficient logical or logical-to-compiled correspondence is available. These results characterize more precisely the security boundary of the direct QOTPH execution model and identify the conditions under which additional circuit-hiding mechanisms become relevant.

Overall, QOTPH provides a unified formal and experimental framework that connects QOTP key propagation, explicit gate-level compilation, local output decryption, and execution on current gate-based hardware.
The compiler is directly applicable under the hidden-logical-circuit model considered here and can serve as a practical base layer for future integrations with FHE-based key handling, blind computation, authentication, or verification mechanisms.

In summary, QOTPH bridges algebraic QOTP key-tracking techniques and their practical realization by providing a unified formal, compilation, and experimental framework for encrypted quantum computation on current gate-based hardware. The framework also provides a natural foundation for future integration with circuit-hiding, verification, authentication, and cryptographic key-handling mechanisms.

\section{Further Work}

While this work demonstrates the practical feasibility of QOTP-based key-tracked client-side compilation for encrypted quantum computation, several important directions remain open.

First, a central direction is the integration or adaptation of circuit-hiding mechanisms for the compiled circuits submitted to the quantum server. In the present framework, key-dependent corrections are computed locally by the client, but the resulting classical gate list may still contain structural information, corrected rotation angles, or topology-dependent features. Possible directions include compatibility with established blind quantum computation protocols, randomized compilation techniques, and hybrid approaches combining QOTP key tracking with classical cryptographic tools. Such extensions would require a separate analysis of their interaction, client-resource, communication, and hardware overheads.

Second, the current protocol is analyzed under an honest-but-curious execution model. Extending the framework to malicious or actively deviating quantum servers requires integrity and verifiability mechanisms. A malicious evaluator could omit gates, insert additional operations, alter measurement procedures, or return falsified results. Future versions of the protocol should therefore investigate compatibility with quantum authentication, verifiable delegated quantum computation, trap-based verification, or other certification techniques.

Third, the experimental implementation should be extended with a more detailed resource and noise analysis. Relevant quantities include circuit depth before and after compilation, number of key-dependent rotations, $T$-count, $T$-depth, transpiled depth, and the effect of backend topology. Future benchmarks should also include comparisons against unencrypted circuits of similar depth in order to separate the overhead of QOTP key tracking from ordinary hardware noise.

Finally, as quantum hardware advances toward lower error rates and eventually fault-tolerant operation, QOTP-based key-tracked compilation may serve as a useful experimental baseline for studying encrypted quantum computation on remote devices. In this setting, the present work provides an implementation-oriented foundation upon which stronger circuit-hiding, verification, and authentication layers can be built.

\paragraph{Acknowledgments} This work was supported by TECNALIA Research \& Innovation through the predoctoral grant No.\ 112794 and by the BIKAINTEK program through scholarship No.\ 017-B2/2024.

We acknowledge the use of IBM Quantum services for this work.

The views expressed are those of the authors and do not reflect the official policy or position of IBM or the IBM Quantum team.

\paragraph{Disclosures}  A patent application related to this work has been filed and is currently pending.

The authors declare no conflicts of interest. 

\paragraph{Data Availability Statement} The numerical data supporting the results presented in this work, including processed fidelity values and representative circuit instances, are available from the corresponding author upon reasonable request.

\paragraph{Code Availability Statement} The implementation used to generate, compile, and evaluate the QOTPH circuits is available from the corresponding author upon reasonable request.

\appendix

\section{Verification of the Primitive Compilation Identities}
\label{app:correctness-proofs}

This appendix verifies the local identity
\begin{equation}
\widetilde{G}P_K
=
e^{i\phi}P_{K'}G
\label{eq:appendix-local-identity}
\end{equation}
for the primitive compilation rules covered by Theorem~\ref{thm:qotph-correctness}. Composite gates are correct whenever their stated decompositions are exact and every primitive gate in the decomposition satisfies Eq.~\eqref{eq:appendix-local-identity}.

\subsection{Clifford gates}

Let $\mathcal{P}_n$ denote the $n$-qubit Pauli group. Every Clifford operator $C$ normalizes $\mathcal{P}_n$. Therefore, for every QOTP frame $K$, there exist an updated frame $K'$ and a
phase $\omega(C,K)$, with $|\omega(C,K)|=1$, such that
\begin{equation}
CP_KC^{\dagger}
=
\omega(C,K)P_{K'}.
\label{eq:appendix-clifford-conjugation}
\end{equation}
Multiplying Eq.~\eqref{eq:appendix-clifford-conjugation} from the
right by $C$ gives
\begin{equation}
CP_K
=
\omega(C,K)P_{K'}C,
\label{eq:appendix-clifford-correctness}
\end{equation}
which is the required local identity.

For the single-qubit Clifford gates used by QOTPH, the corresponding key-frame transformations are
\begin{align}
H:
\quad
(j,k)
&\longmapsto
(k,j),
\\
S:
\quad
(j,k)
&\longmapsto
(j,k\oplus j),
\\
\sqrt{X}:
\quad
(j,k)
&\longmapsto
(j\oplus k,k).
\end{align}
The logical Pauli gates $X$, $Y$, and $Z$ leave the frame unchanged up to an irrelevant phase. The CNOT transformation is derived explicitly in Section~\ref{subsec:cnot-key-propagation-example}. The CZ and SWAP rules follow analogously from their Pauli conjugation relations, while the CY frame update follows from the equivalent identity $\mathrm{CY}_{c,t}=(I\otimes S)\mathrm{CX}_{c,t}(I\otimes S^\dagger)$.

\subsection{Pauli-generated rotations}

Let $Q$ be a Hermitian Pauli operator satisfying $Q^2=I$, and define
\begin{equation}
R_Q(\theta)
=
e^{-i\theta Q/2}.
\label{eq:appendix-pauli-rotation}
\end{equation}
For a QOTP frame $K$, define the commutation parity
$\chi_Q(K)\in\{0,1\}$ by
\begin{equation}
QP_K
=
(-1)^{\chi_Q(K)}P_KQ.
\label{eq:appendix-commutation-parity}
\end{equation}
It follows that
\begin{equation}
P_KQP_K^{\dagger}
=
(-1)^{\chi_Q(K)}Q.
\end{equation}
Consequently,
\begin{align}
P_KR_Q(\theta)P_K^{\dagger}
&=
\exp\left(
-\frac{i\theta}{2}
P_KQP_K^{\dagger}
\right)
\nonumber\\
&=
R_Q\left(
(-1)^{\chi_Q(K)}\theta
\right).
\end{align}
Multiplying from the right by $P_K$ gives
\begin{equation}
R_Q\left(
(-1)^{\chi_Q(K)}\theta
\right)P_K
=
P_KR_Q(\theta).
\label{eq:appendix-pauli-rotation-correctness}
\end{equation}

For the rotations used in QOTPH, the corresponding parities are
\begin{align}
\chi_X(K_i)
&=
k_i,
\\
\chi_Y(K_i)
&=
j_i\oplus k_i,
\\
\chi_Z(K_i)
&=
j_i,
\\
\chi_{X_pX_q}(K)
&=
k_p\oplus k_q,
\\
\chi_{Y_pY_q}(K)
&=
j_p\oplus k_p\oplus j_q\oplus k_q,
\\
\chi_{Z_pZ_q}(K)
&=
j_p\oplus j_q.
\end{align}
Equation~\eqref{eq:appendix-pauli-rotation-correctness} proves the
angle-correction rules for
$R_x$, $R_y$, $R_z$, $R_{XX}$, $R_{YY}$, and $R_{ZZ}$.
The rules for $T$, $T^{\dagger}$, and $S^{\dagger}$ follow from
\begin{equation}
\begin{aligned}
T
&\simeq
R_z(\pi/4),
&
T^{\dagger}
&\simeq
R_z(-\pi/4),
&
S^{\dagger}
&\simeq
R_z(-\pi/2),
\end{aligned}
\end{equation}
where $\simeq$ denotes equality up to a global phase.

\subsection{Controlled Pauli rotations}

Let
\begin{equation}
\begin{aligned}
C_Q(\theta)
&=
|0\rangle\!\langle0|_c\otimes I_t
\\
&\quad +
|1\rangle\!\langle1|_c
\otimes R_Q(\theta)_t
\end{aligned}
\label{eq:appendix-controlled-rotation}
\end{equation}
be a controlled Pauli rotation. Define
\begin{equation}
\theta_K
=
(-1)^{\chi_Q(K_t)}\theta,
\end{equation}
where $K_t$ denotes the target-qubit frame. The compiled operation is
\begin{equation}
\widetilde{C}_Q(\theta,K)
=
\begin{cases}
C_Q(\theta_K),
&
j_c=0,
\\[1mm]
R_Q(\theta_K)_t C_Q(-\theta_K),
&
j_c=1.
\end{cases}
\label{eq:appendix-controlled-rule}
\end{equation}

When $j_c=0$, correctness follows directly from
Eq.~\eqref{eq:appendix-pauli-rotation-correctness}. When $j_c=1$,
direct evaluation in the computational basis of the control gives
\begin{equation}
\begin{aligned}
&
R_Q(\alpha)_t C_Q(-\alpha)
(X_c\otimes I_t)
\\
&\qquad =
(X_c\otimes I_t)C_Q(\alpha).
\end{aligned}
\label{eq:appendix-control-x-identity}
\end{equation}
The control $Z$-mask commutes with $C_Q(\theta)$. Combining
Eqs.~\eqref{eq:appendix-pauli-rotation-correctness} and
\eqref{eq:appendix-control-x-identity} proves the compilation rules
for $CR_x(\theta)$, $CR_y(\theta)$, and $CR_z(\theta)$.

\subsection{Composite gates}

Suppose that a logical gate $G$ has an exact decomposition
\begin{equation}
V_rV_{r-1}\cdots V_1
=
e^{i\alpha_G}G,
\label{eq:appendix-gate-decomposition}
\end{equation}
where every primitive gate $V_s$ satisfies
Eq.~\eqref{eq:appendix-local-identity}. Let $K^{(0)}=K$ and let
\begin{equation}
\widetilde{V}_sP_{K^{(s-1)}}
=
e^{i\phi_s}P_{K^{(s)}}V_s.
\end{equation}
Successive application of these identities gives
\begin{equation}
\begin{aligned}
&
\widetilde{V}_r\cdots\widetilde{V}_1P_K
\\
&\qquad =
e^{i(\alpha_G+\sum_{s=1}^{r}\phi_s)}
P_{K^{(r)}}G.
\end{aligned}
\end{equation}
Therefore, every exact decomposition into verified primitives also
satisfies the local compilation identity. In the present
implementation, this argument is applied to
\[
\mathrm{CH},
\mathrm{CS},
CR_z(\pi/4),
\mathrm{CXX},
\mathrm{CCX},
\mathrm{CCZ},
\text{and Fredkin}.
\]
The fixed decompositions were verified by direct unitary multiplication. In particular, under the $\sqrt{X}$ convention used by Qiskit, the sequence implemented by \Call{EvaluateCCX}{} satisfies
\[
V_{\mathrm{CCX}}
=
e^{i3\pi/8}\mathrm{CCX}.
\]
The remaining decompositions are exact up to a global phase. Consequently, their compiled implementations satisfy Eq.~\eqref{eq:appendix-local-identity}.

\section{Detailed client-side gate compilation rules}
\label{app:gate-compilation-rules}

This appendix provides the complete client-side compilation routines used in the QOTPH implementation. These routines operationalize the gate rules summarized in Table~\ref{homomorphic-decompositions}. All key-dependent angle corrections, decompositions, and QOTP key-frame updates are computed locally by the client before the compiled circuit is submitted to the quantum server.

\begin{breakablealgorithm}
\caption{Client-side homomorphic gate compilation rules}
\label{alg:qhe-gate-eval}

{\scriptsize
\begin{algorithmic}[1]

\Procedure{EvaluateGate}{$G_\ell,Q_\ell,\Theta_\ell,K$}
\Statex \textbf{Execution domain: client-side compilation.}
\Statex This routine is executed locally by the client while constructing the compiled circuit.
\Statex The quantum server does not execute this routine and does not receive the QOTP key frame $K$.
\Statex In this algorithm, ``Apply'' means that the client appends the corresponding physical gate or decomposition to the compiled circuit.
\State \Comment{All key-dependent gate choices are resolved locally before server-side execution}

\If{$G_\ell=R_z$}
\State Let $Q_\ell=(q)$ and $\Theta_\ell=(\theta)$
\State Let $K_q=(j_q,k_q)$
\State $\theta'\gets (-1)^{j_q}\theta$
\State Apply $R_z(\theta')$ to qubit $q$

\ElsIf{$G_\ell=R_y$}
\State Let $Q_\ell=(q)$ and $\Theta_\ell=(\theta)$
\State Let $K_q=(j_q,k_q)$
\State $\theta'\gets (-1)^{j_q\oplus k_q}\theta$
\State Apply $R_y(\theta')$ to qubit $q$

\ElsIf{$G_\ell=R_x$}
\State Let $Q_\ell=(q)$ and $\Theta_\ell=(\theta)$
\State Apply $H$ to qubit $q$
\State $K_q\gets\Call{UpdateH}{K_q}$
\State Let $K_q=(j_q,k_q)$
\State $\theta'\gets (-1)^{j_q}\theta$
\State Apply $R_z(\theta')$ to qubit $q$
\State Apply $H$ to qubit $q$
\State $K_q\gets\Call{UpdateH}{K_q}$

\ElsIf{$G_\ell=H$}
\State Let $Q_\ell=(q)$
\State Apply $H$ to qubit $q$
\State $K_q\gets\Call{UpdateH}{K_q}$

\ElsIf{$G_\ell=S$}
\State Let $Q_\ell=(q)$
\State Apply $S$ to qubit $q$
\State $K_q\gets\Call{UpdateS}{K_q}$

\ElsIf{$G_\ell=S^\dagger$}
\State Let $Q_\ell=(q)$
\State $K\gets\Call{EvaluateGate}{R_z,(q),(-\pi/2),K}$

\ElsIf{$G_\ell=\sqrt{X}$}
\State Let $Q_\ell=(q)$
\State Apply $\sqrt{X}$ to qubit $q$
\State $K_q\gets\Call{UpdateSX}{K_q}$

\ElsIf{$G_\ell=T$}
\State Let $Q_\ell=(q)$
\State $K\gets\Call{EvaluateGate}{R_z,(q),(\pi/4),K}$

\ElsIf{$G_\ell=T^\dagger$}
\State Let $Q_\ell=(q)$
\State $K\gets\Call{EvaluateGate}{R_z,(q),(-\pi/4),K}$

\ElsIf{$G_\ell=X$}
\State Let $Q_\ell=(q)$
\State Apply $X$ to qubit $q$
\State \Comment{No QOTP key update is required}

\ElsIf{$G_\ell=Y$}
\State Let $Q_\ell=(q)$
\State Apply $Y$ to qubit $q$
\State \Comment{No QOTP key update is required}

\ElsIf{$G_\ell=Z$}
\State Let $Q_\ell=(q)$
\State Apply $Z$ to qubit $q$
\State \Comment{No QOTP key update is required}

\ElsIf{$G_\ell=CX$}
\State Let $Q_\ell=(c,t)$
\State Apply $CX(c,t)$
\State $(K_c,K_t)\gets\Call{UpdateCX}{K_c,K_t}$

\ElsIf{$G_\ell=CZ$}
\State Let $Q_\ell=(c,t)$
\State Apply $CZ(c,t)$
\State $(K_c,K_t)\gets\Call{UpdateCZ}{K_c,K_t}$

\ElsIf{$G_\ell=CY$}
\State Let $Q_\ell=(c,t)$
\State Apply $CY(c,t)$
\State $K_t\gets\Call{UpdateS}{K_t}$
\State $(K_c,K_t)\gets\Call{UpdateCX}{K_c,K_t}$
\State $K_t\gets\Call{UpdateS}{K_t}$

\ElsIf{$G_\ell=CS$}
\State Let $Q_\ell=(c,t)$
\State $K\gets\Call{EvaluateGate}{T,(c),(),K}$
\State $K\gets\Call{EvaluateGate}{T,(t),(),K}$
\State Apply $CX(c,t)$
\State $(K_c,K_t)\gets\Call{UpdateCX}{K_c,K_t}$
\State $K\gets\Call{EvaluateGate}{T^\dagger,(t),(),K}$
\State Apply $CX(c,t)$
\State $(K_c,K_t)\gets\Call{UpdateCX}{K_c,K_t}$
\State \Comment{$CS$ is compiled through an exact decomposition into verified primitives}

\ElsIf{$G_\ell=SWAP$}
\State Let $Q_\ell=(q_1,q_2)$
\State Apply $SWAP(q_1,q_2)$
\State $(K_{q_1},K_{q_2})\gets(K_{q_2},K_{q_1})$

\ElsIf{$G_\ell=CXX$}
\State Let $Q_\ell=(c,t_1,t_2)$
\State Apply $CX(c,t_1)$
\State $(K_c,K_{t_1})\gets\Call{UpdateCX}{K_c,K_{t_1}}$
\State Apply $CX(c,t_2)$
\State $(K_c,K_{t_2})\gets\Call{UpdateCX}{K_c,K_{t_2}}$

\ElsIf{$G_\ell=CH$}
\State Let $Q_\ell=(c,t)$
\State $K\gets\Call{EvaluateGate}{R_y,(t),(\pi/4),K}$
\State Apply $CX(c,t)$
\State $(K_c,K_t)\gets\Call{UpdateCX}{K_c,K_t}$
\State $K\gets\Call{EvaluateGate}{R_y,(t),(-\pi/4),K}$

\ElsIf{$G_\ell=CR_z$}
\State Let $Q_\ell=(c,t)$ and $\Theta_\ell=(\phi)$

\If{$\phi=\pi/4$}
    \State $K\gets\Call{EvaluateCRzQuarter}{c,t,K}$
    \State \Comment{Specialized exact decomposition for $CR_z(\pi/4)$}
\Else
    \State Let $K_c=(j_c,k_c)$ and $K_t=(j_t,k_t)$
    \State $\phi_t\gets(-1)^{j_t}\phi$
    \If{$j_c=0$}
        \State Apply $CR_z(\phi_t)$ with control $c$ and target $t$
    \Else
        \State Apply $R_z(\phi_t)$ to qubit $t$
        \State Apply $CR_z(-\phi_t)$ with control $c$ and target $t$
    \EndIf
    \State \Comment{No QOTP key update is required}
\EndIf

\ElsIf{$G_\ell=CR_x$}
\State Let $Q_\ell=(c,t)$ and $\Theta_\ell=(\theta)$
\State Let $K_c=(j_c,k_c)$ and $K_t=(j_t,k_t)$
\State $\theta_t\gets(-1)^{k_t}\theta$
\If{$j_c=0$}
\State Apply $CR_x(\theta_t)$ with control $c$ and target $t$
\Else
\State Apply $R_x(\theta_t)$ to qubit $t$
\State Apply $CR_x(-\theta_t)$ with control $c$ and target $t$
\EndIf
\State \Comment{No QOTP key update is required}

\ElsIf{$G_\ell=CR_y$}
\State Let $Q_\ell=(c,t)$ and $\Theta_\ell=(\theta)$
\State Let $K_c=(j_c,k_c)$ and $K_t=(j_t,k_t)$
\State $\theta_t\gets(-1)^{j_t\oplus k_t}\theta$
\If{$j_c=0$}
\State Apply $CR_y(\theta_t)$ with control $c$ and target $t$
\Else
\State Apply $R_y(\theta_t)$ to qubit $t$
\State Apply $CR_y(-\theta_t)$ with control $c$ and target $t$
\EndIf
\State \Comment{No QOTP key update is required}

\ElsIf{$G_\ell=R_{XX}$}
\State Let $Q_\ell=(p,q)$ and $\Theta_\ell=(\theta)$
\State Let $K_p=(j_p,k_p)$ and $K_q=(j_q,k_q)$
\State $\theta'\gets(-1)^{k_p\oplus k_q}\theta$
\State Apply $R_{XX}(\theta')$ to qubits $p,q$
\State \Comment{No QOTP key update is required}

\ElsIf{$G_\ell=R_{YY}$}
\State Let $Q_\ell=(p,q)$ and $\Theta_\ell=(\theta)$
\State Let $K_p=(j_p,k_p)$ and $K_q=(j_q,k_q)$
\State $\theta'\gets(-1)^{j_p\oplus k_p\oplus j_q\oplus k_q}\theta$
\State Apply $R_{YY}(\theta')$ to qubits $p,q$
\State \Comment{No QOTP key update is required}

\ElsIf{$G_\ell=R_{ZZ}$}
\State Let $Q_\ell=(p,q)$ and $\Theta_\ell=(\theta)$
\State Let $K_p=(j_p,k_p)$ and $K_q=(j_q,k_q)$
\State $\theta'\gets(-1)^{j_p\oplus j_q}\theta$
\State Apply $R_{ZZ}(\theta')$ to qubits $p,q$
\State \Comment{No QOTP key update is required}

\ElsIf{$G_\ell=CCX$}
\State Let $Q_\ell=(a,b,t)$
\State $K\gets\Call{EvaluateCCX}{a,b,t,K}$

\ElsIf{$G_\ell=CCZ$}
\State Let $Q_\ell=(a,b,c)$
\State Apply $H$ to qubit $c$
\State $K_c\gets\Call{UpdateH}{K_c}$
\State $K\gets\Call{EvaluateCCX}{a,b,c,K}$
\State Apply $H$ to qubit $c$
\State $K_c\gets\Call{UpdateH}{K_c}$

\ElsIf{$G_\ell=\mathrm{Fredkin}$}
\State Let $Q_\ell=(c,a,b)$
\State Apply $CX(b,a)$
\State $(K_b,K_a)\gets\Call{UpdateCX}{K_b,K_a}$
\State $K\gets\Call{EvaluateCCX}{c,a,b,K}$
\State Apply $CX(b,a)$
\State $(K_b,K_a)\gets\Call{UpdateCX}{K_b,K_a}$

\Else
\State Raise error: unsupported operation $G_\ell$
\EndIf

\State \Return $K$
\EndProcedure

\end{algorithmic}
}
\end{breakablealgorithm}

\begin{breakablealgorithm}
\caption{Homomorphic Evaluation of Composite Gates}
\label{alg:qhe-composite-gates}
\begin{algorithmic}[1]
\scriptsize
\Procedure{EvaluateCRzQuarter}{$c,t,K$}
\State \Comment{$CR_z(\pi/4)$ is implemented using two CNOT gates and corrected target rotations}
\State $K\gets\Call{EvaluateGate}{R_z,(t),(\pi/8),K}$
\State Apply $CX(c,t)$
\State $(K_c,K_t)\gets\Call{UpdateCX}{K_c,K_t}$
\State $K\gets\Call{EvaluateGate}{R_z,(t),(-\pi/8),K}$
\State Apply $CX(c,t)$
\State $(K_c,K_t)\gets\Call{UpdateCX}{K_c,K_t}$
\State \Return $K$
\EndProcedure

\Procedure{EvaluateCCX}{$a,b,t,K$}
\State \Comment{First Hadamard on the target, using $H\simeq R_z(\pi/2)\sqrt{X}R_z(\pi/2)$}
\State $K\gets\Call{EvaluateGate}{R_z,(t),(\pi/2),K}$
\State Apply $\sqrt{X}$ to qubit $t$
\State $K_t\gets\Call{UpdateSX}{K_t}$
\State $K\gets\Call{EvaluateGate}{R_z,(t),(\pi/2),K}$

\State Apply $CX(b,t)$
\State $(K_b,K_t)\gets\Call{UpdateCX}{K_b,K_t}$
\State $K\gets\Call{EvaluateGate}{R_z,(t),(7\pi/4),K}$

\State Apply $CX(a,t)$
\State $(K_a,K_t)\gets\Call{UpdateCX}{K_a,K_t}$
\State $K\gets\Call{EvaluateGate}{R_z,(t),(\pi/4),K}$

\State Apply $CX(b,t)$
\State $(K_b,K_t)\gets\Call{UpdateCX}{K_b,K_t}$
\State $K\gets\Call{EvaluateGate}{R_z,(t),(7\pi/4),K}$

\State Apply $CX(a,t)$
\State $(K_a,K_t)\gets\Call{UpdateCX}{K_a,K_t}$
\State $K\gets\Call{EvaluateGate}{R_z,(b),(\pi/4),K}$
\State $K\gets\Call{EvaluateGate}{R_z,(t),(\pi/4),K}$

\State \Comment{Second Hadamard on the target}
\State $K\gets\Call{EvaluateGate}{R_z,(t),(\pi/2),K}$
\State Apply $\sqrt{X}$ to qubit $t$
\State $K_t\gets\Call{UpdateSX}{K_t}$
\State $K\gets\Call{EvaluateGate}{R_z,(t),(\pi/2),K}$

\State \Comment{Additional phase correction on the controls}
\State Apply $CX(a,b)$
\State $(K_a,K_b)\gets\Call{UpdateCX}{K_a,K_b}$
\State $K\gets\Call{EvaluateGate}{R_z,(a),(\pi/4),K}$
\State $K\gets\Call{EvaluateGate}{R_z,(b),(7\pi/4),K}$
\State Apply $CX(a,b)$
\State $(K_a,K_b)\gets\Call{UpdateCX}{K_a,K_b}$

\State \Return $K$
\EndProcedure

\end{algorithmic}
\end{breakablealgorithm}

\begin{breakablealgorithm}
\caption{Primitive QOTP Key-Frame Update Rules}
\label{alg:qotp-key-updates}
\scriptsize
\begin{algorithmic}[1]

\Procedure{UpdateH}{$K_q$}
\State Let $K_q=(j_q,k_q)$
\State \Return $(k_q,j_q)$
\EndProcedure

\Procedure{UpdateS}{$K_q$}
\State Let $K_q=(j_q,k_q)$
\State \Return $(j_q,j_q\oplus k_q)$
\EndProcedure

\Procedure{UpdateSX}{$K_q$}
\State Let $K_q=(j_q,k_q)$
\State \Return $(j_q\oplus k_q,k_q)$
\EndProcedure

\Procedure{UpdateCX}{$K_c,K_t$}
\State Let $K_c=(j_c,k_c)$ and $K_t=(j_t,k_t)$
\State $K_c'\gets(j_c,k_c\oplus k_t)$
\State $K_t'\gets(j_c\oplus j_t,k_t)$
\State \Return $(K_c',K_t')$
\EndProcedure

\Procedure{UpdateCZ}{$K_c,K_t$}
\State Let $K_c=(j_c,k_c)$ and $K_t=(j_t,k_t)$
\State $K_c'\gets(j_c,k_c\oplus j_t)$
\State $K_t'\gets(j_t,k_t\oplus j_c)$
\State \Return $(K_c',K_t')$
\EndProcedure

\end{algorithmic}
\end{breakablealgorithm}

\bibliographystyle{quantum}

\bibliography{sample}

\end{document}